\newcites{sec}{Reference}
\newtheorem{theorem}{Theorem}
\newtheorem{lemma}{Lemma}
\DeclareMathAlphabet{\mathpzc}{OT1}{pzc}{m}{it}
\newcommand{\RNum}[1]{\uppercase\expandafter{\romannumeral #1\relax}}
\newsavebox\mysavebox
\begin{document}

\title{Unveiling Explosive Vulnerability of Networks through Edge Collective Behavior}

\author{\textsc{Peng Peng}}
\affiliation{Institute of Fundamental and Frontier Studies, University of Electronic Science and Technology of China, 611731 Chengdu, P.R. China}

\author{\textsc{Tianlong Fan}}
\email{Correspondence should be addressed to: \\ tianlong.fan@ustc.edu.cn, linyuan.lv@uestc.edu.cn}
\affiliation{School of Cyber Science and Technology, University of Science and Technology of China, 230026 Hefei, P.R. China}

\author{\textsc{Xiao-Long Ren}}
\affiliation{Yangtze Delta Region Institute (Huzhou), University of Electronic Science and Technology of China, 313001 Huzhou, P.R. China}

\author{\textsc{Linyuan L\"{u}}}
\email{Correspondence should be addressed to: \\ tianlong.fan@ustc.edu.cn, linyuan.lv@uestc.edu.cn}
\affiliation{Institute of Fundamental and Frontier Studies, University of Electronic Science and Technology of China, 611731 Chengdu, P.R. China}
\affiliation{School of Cyber Science and Technology, University of Science and Technology of China, 230026 Hefei, P.R. China}


\date{\today}

\maketitle
\begin{spacing}{0.91} 

\textbf{\small Edges, binding together nodes within networks, have the potential to induce dramatic transitions when specific collective failure behaviors emerge. These changes, initially unfolding covertly and then erupting abruptly, pose substantial, unforeseeable threats to networked systems, and are termed explosive vulnerability. Thus, identifying influential edges capable of triggering such drastic transitions, while minimizing cost, is of utmost significance. Here, we address this challenge by introducing edge collective influence (ECI), which builds upon the optimal percolation theory applied to line graphs. ECI embodies features of both optimal and explosive percolation, involving minimized removal costs and explosive dismantling tactic. Furthermore, we introduce two improved versions of ECI, namely IECI and IECIR, tailored for objectives of hidden and fast dismantling, respectively, with their superior performance validated in both synthetic and empirical networks. Finally, we present a dual competitive percolation (DCP) model, whose reverse process replicates the explosive dismantling process and the trajectory of the cost function of ECI, elucidating the microscopic mechanisms enabling ECI's optimization. ECI and the DCP model demonstrate the profound connection between optimal and explosive percolation. This work significantly deepens our comprehension of percolation and provides valuable insights into the explosive vulnerabilities arising from edge collective behaviors.}
\end{spacing}

{\small
Connectivity plays an indispensable role in the operation of networked systems~\cite{boccalettiComplex2006,newmanStructure2011} \!\!. Its transition often manifests as an abrupt eruption of severe systemic risks, triggered by the accumulation of seemingly minor perturbations that result from the failure of basic components, such as units (nodes) and connections (edges)~\cite{dsouzaExplosive2019,liPercolation2021,grossPercolation2022} \!\!. The initially imperceptible disturbances followed by a sudden eruption of severe disintegration, known as ``explosive vulnerability'', constitute two key characteristics of this phenomenon, often giving rise to significant yet challenging-to-counter systemic catastrophic risks. Hence, the identification of interactions with significant collective impact prior to the occurrence of extensive changes in the network’s structure and dynamics is of paramount importance. 

Explosive vulnerability fundamentally embodies a network dismantling paradigm. The large-scale network dismantling~\cite{braunsteinNetworkDismantling2016} \!\!, which underscores the efficient disintegration of networks by removing a minimum set of nodes or edges, falls within the class of nondeterministic polynomial hard (NP-hard) problems, presenting a persisting and substantial challenge~\cite{braunsteinNetworkDismantling2016, renGeneralizedNetworkDismantling2019} \!\!. This challenge arises in various practical scenarios, including the regular preservation of power grids to prevent large-scale cascading failures~\cite{yangSmall2017,schaferDynamically2018} \!\!, finding superblockers~\cite{radicchiFundamentalDifferenceSuperblockers2017} in epidemic networks for targeted vaccination to hinder epidemic spread~\cite{altarelliContainingEpidemicOutbreaks2014} \!\!, and disrupting criminal networks with minimal cost~\cite{renGeneralizedNetworkDismantling2019} \!\!. A considerable amount of prior research~\cite{moroneInfluenceMaximizationComplex2015,braunsteinNetworkDismantling2016,clusellaImmunizationTargetedDestruction2016,mugishaIdentifyingOptimalTargets2016,zdeborovaFastSimpleDecycling2016,tianArticulationPointsComplex2017,liuOptimizationTargetedNode2018,renGeneralizedNetworkDismantling2019,fanFindingKeyPlayers2020,grassiaMachineLearningDismantling2021} has been devoted to the dismantling problem. For example, the Collective Influence (CI) of node based on optimal percolation~\cite{moroneInfluenceMaximizationComplex2015} brings the global function of influence to our attention, the belief propagation-guided decimation (BPD)~\cite{mugishaIdentifyingOptimalTargets2016} highlights the impact of loop structures on the dismantling task, the Min-Sum algorithm~\cite{braunsteinNetworkDismantling2016} then demonstrates the collective essence of the dismantling problem, the approach of Generalized Network Dismantling (GND)~\cite{renGeneralizedNetworkDismantling2019} further takes arbitrary node removal cost into consideration, and machine learning has also been effectively applied to the dismantling task~\cite{fanFindingKeyPlayers2020, grassiaMachineLearningDismantling2021} recently. However, these advancements do not address the question of what dismantling strategies can capture the characteristic of explosive vulnerability in networks.

The explosiveness characteristic reminds us of the explosive percolation (EP)~\cite{achlioptasExplosivePercolationRandom2009}, which performs competitive addition of edges leading to sudden, profound structural shifts at a critical point~\cite{araujoExplosivePercolationControl2010,dacostaExplosivePercolationTransition2010,riordanExplosivePercolationContinuous2011,kuehnUniversalRouteExplosive2021} \!\!. It has been linked to a multitude of real-world explosive phenomena, including explosive opinion depolarization~\cite{ojerModelingExplosiveOpinion2023} \!\!, explosive synchronization~\cite{millanExplosiveHigherorderKuramoto2020,thumlerSynchronyWeakCoupling2023} \!\!, and higher-order percolation~\cite{sunHigherorderPercolationProcesses2021a,sunDynamicNaturePercolation2023a, battiston2021physics} \!\!. Additionally, the reverse percolation process has been employed to solve network dismantling problem, such as the inverse targeting immunization strategy~\cite{schneider2012inverse} and the explosive immunization (EI)~\cite{clusellaImmunizationTargetedDestruction2016} method. Since the competitive growth suppression of these methods occurs at the nodal level, each removal of a node from the giant connected component (GCC) will decrease the GCC size given the finite size of the dismantled networks, and thus the initial undetectable disturbance feature of explosive vulnerability (i.e., none decrease of the GCC size) requires the consideration of the competitive selection on edges rather than nodes. Therefore, accurately identifying the edge collective behavior exhibiting explosive vulnerability remains a pressing inquiry in need of resolution.

Optimal percolation excels in identifying nodes with the most collective influence, however, it finds relatively fewer applications at the edge level. Optimal percolation minimizes the dismantling cost by prioritizing the destruction of loops within the network~\cite{moroneInfluenceMaximizationComplex2015} \!\!. When it comes to edge removal scenarios, nodes on loops are preserved due to their redundant connections with the rest of the network. This preservation of loop nodes exhibits an element of concealment during the process of network fragmentation of breaking loops. In contrast, explosive percolation achieves early growth suppression of connectivity through global competition~\cite{naglerImpactSingleLinks2011a} \!\!, resulting in a delayed initiation of percolation phase transitions. This delay intensifies connectivity growth at the percolation threshold and requires the addition of only a minimal number of edges post-transition to complete percolation. Similarly, employing its reverse process for network dismantling enables the imperceptible removal of a few edges, leading to a sudden and explosive network disintegration. Thus, while optimal and explosive percolation may appear markedly distinct, their pronounced similarities in understanding network explosive vulnerability become evident. However, whether a deeper intrinsic connection between these optimal and explosive percolation exists remains an open question.

In this study, we capture the explosive vulnerability of networks by introducing an edge collective influence (ECI) method based on the optimal percolation theory~\cite{moroneInfluenceMaximizationComplex2015} \!\!. ECI can dismantle networks with minimal edge removal cost and in an explosive manner, combining the features of both optimal and explosive percolation. Tailoring for hidden and fast dismantling objectives, we devise two improved ECI approaches, IECI and IECIR, utilizing the reinsertion and reordering techniques respectively. Numerical experiments demonstrate that these two algorithms outperform all other competitive methods in both synthetic and empirical networks in terms of cost-effectiveness. They exhibit high performance and versatility, enabling their application across a diverse range of scenarios. Furthermore, we develop a dual competitive percolation (DCP) model whose reverse process replicates the explosive dismantling process and the trajectory of the cost function of ECI. This cost function serves as a representation of global influence. By employing competitive growth suppression at both the node and cluster levels, the DCP model unveils the pathway through which collective behavior triggers explosive vulnerability, as well as the microscopic mechanism behind the optimization of ECI and its cost function. More notably, IECI and IECIR have dismantling thresholds that can be approximated by the percolation threshold of an improved dual competitive percolation model. The dismantling threshold represents the proportion of edges removed when reaching the dismantling target.

\begin{figure}[htb]
  \centering
  \includegraphics[width=0.48\textwidth]{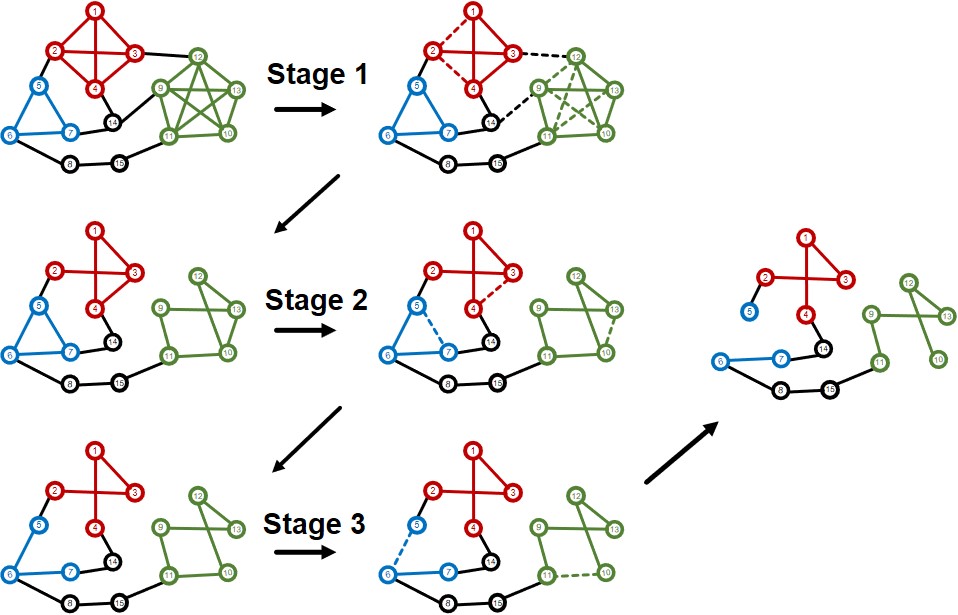}
  \caption{\textbf{The ECI dismantling process.} {\textbf{a}, Schematic of the ECI dismantling process in an example network. The dashed line shows the edge removed in each stage. In stage 1, ECI dismantles a 4-simplex (green) and a 3-simplex (red). Then in stage 2, three 2-simplices are dismantled. In stage 3, all loops are removed, resulting in a chain-like network.}
  }
  \label{fig:fig1}
  \vspace*{-0.3cm}
\end{figure}

\begin{figure*}[htb]
    \centering
    \includegraphics[width=0.8\textwidth]{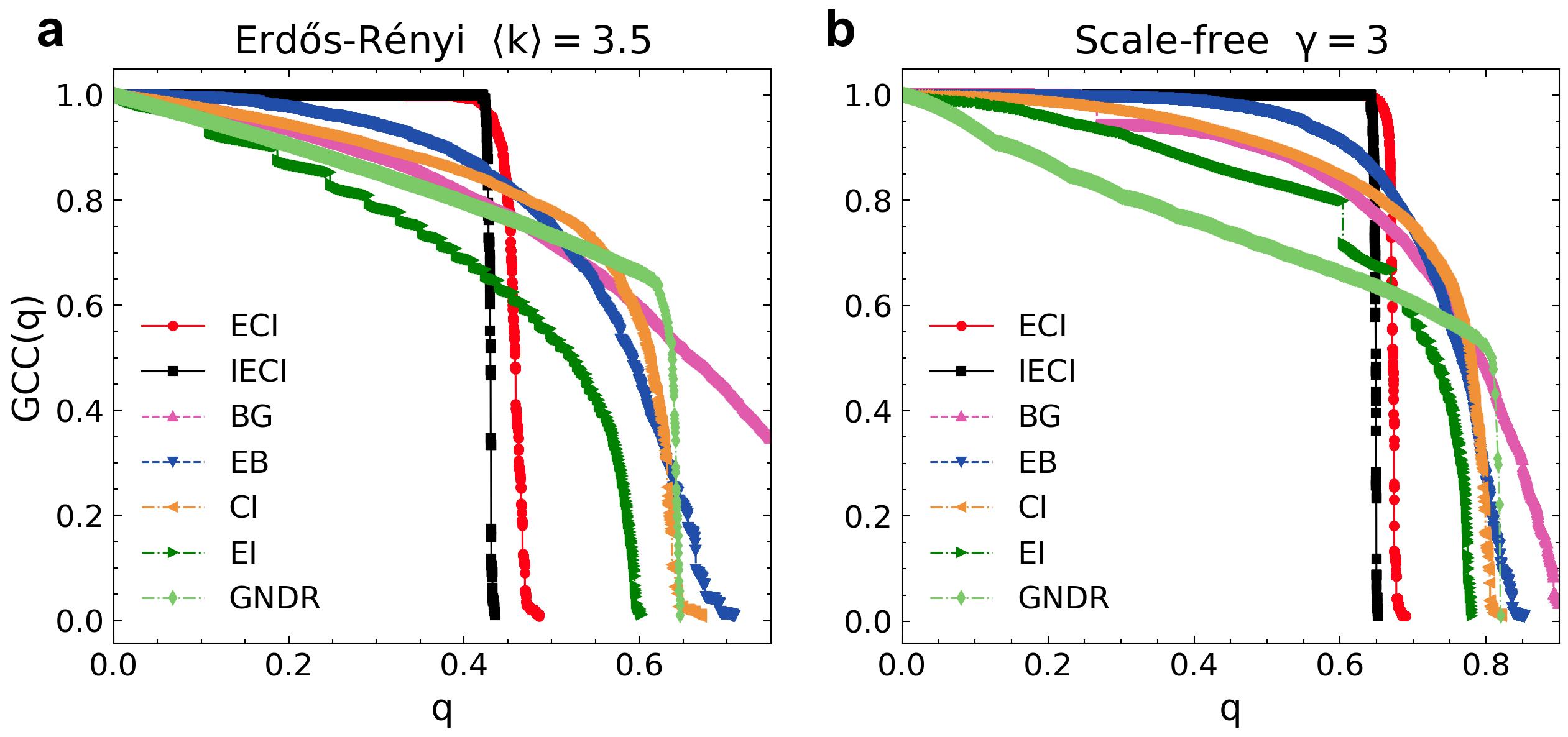}
    \caption{\textbf{Outperformance and explosive dismantling characteristic of ECI and IECI in synthetic networks.} Here, the dismantling strategies, i.e., bridgeness (BG)~\cite{chengBridgenessLocalIndex2010}, edge betweenness (EB)~\cite{girvanCommunityStructureSocial2002}, CI, EI, and generalized network dismantling with reinsertion (GNDR)~\cite{renGeneralizedNetworkDismantling2019} \!\! (see Supplementary Information Section~\ref{S3.} for definitions), are employed as comparative baselines for the END problem. In all the cases here, the dismantling target is set to $C=0.01N$. $GCC(q)$ represents the relative size of GCC after removing a fraction $q$ of edges. \textbf{a,}  $GCC(q)$ for an Erdős–Rényi network~\cite{newmanNetworks2018} (10,000 nodes and 17,500 edges, i.e., average degree $\langle k \rangle = 3.5$). 
    \textbf{b,} $GCC(q)$ for a scale-free network with degree exponent $\gamma = 3$ generated by Barabási–Albert model~\cite{barabasiEmergenceScalingRandom1999} (10,000 nodes and parameter $m=3$). }
    \label{fig:model}
    \vspace*{-0.3cm}
\end{figure*}

\begin{figure*}[htb]
    \centering
    \includegraphics[width=0.8\textwidth]{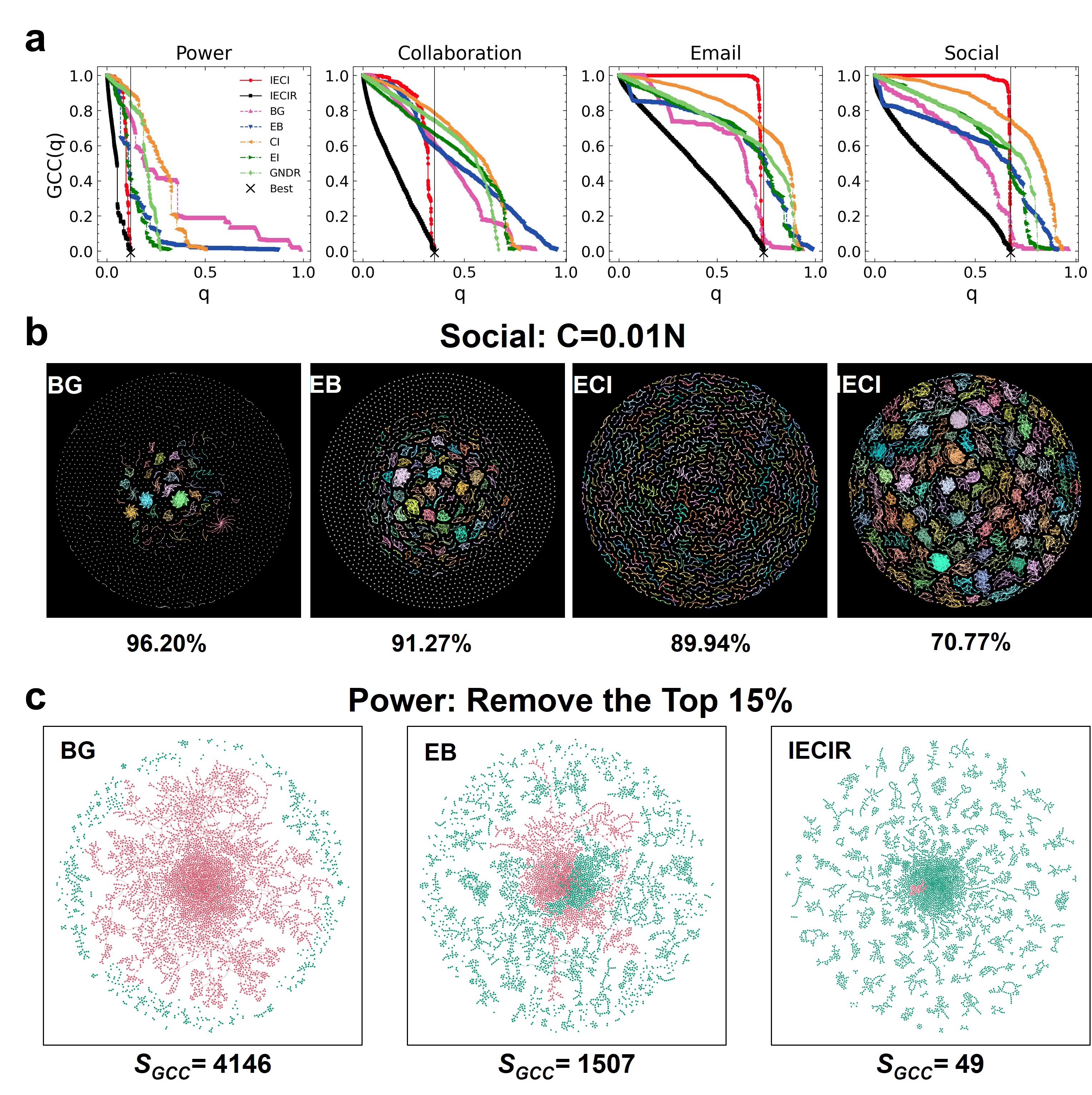}
    \caption{\textbf{Performance of IECI and IECIR in empirical networks.} We compare IECI and IECIR with some baseline dismantling methods.  \textbf{a,} The $GCC(q)$ for four empirical networks, each dismantled by IECI, IECIR, BG, EB, CI, EI, and GNDR, with the dismantling target set to $C=0.01N$. The smallest dismantling threshold among these methods is marked with black `$\times$'. The empirical networks (see Supplementary Information Section~\ref{S2.} for topology information) consist of a power-grid network~\cite{rossiNetworkDataRepository2015} (Power), a scientific collaboration network~\cite{rossiNetworkDataRepository2015} (Collaboration), an Email network~\cite{rossiNetworkDataRepository2015} (Email), and an online social network of Petster–Hamster~\cite{renGeneralizedNetworkDismantling2019} (Social). 
    \textbf{b,} A social network~\cite{renGeneralizedNetworkDismantling2019} dismantled by BG, EB, ECI, and IECI, respectively, with a dismantling target of $C=0.01N$. The percentage below each diagram represents the remove fraction $q$ of the corresponding method. \textbf{c,} A power network dismantled by BG, EB, and IECIR respectively, with a dismantling cost of the top $15\%$ edges. The $S_{GCC}$ value below each diagram represents the size of the GCC of the corresponding method, indicated by the red portion in the diagram. Note that IECIR only needs to remove $13\%$ of the total edges to achieve the dismantling target.}
    \label{fig:real}
    \vspace*{-0.3cm}
\end{figure*}

\section*{The approach to the edge-based network dismantling problem}

Given a network $G(V,E)$, where $V$ and $E$ denote the set of nodes and the set of edges in the network, respectively. $N=|V|$ and $M=|E|$ are the number of nodes and edges of the network, respectively. Then the edge-based network dismantling (END) problem can be formulated as follows: given a dismantling target $C$, find a minimum set of edges $E_{r}$ such that the size of GCC in the network $G^*(V,E\backslash E_{r})$ , denoted as $S_{GCC}$, satisfies $S_{GCC} \leq C$.

Here, we opt to generalize the optimal percolation theory~\cite{moroneInfluenceMaximizationComplex2015} to edges to identify the edge set with the most collective influence in networks. Let the vector $\mathbf{m}=\{m_1,\dots,m_M\}$ denote whether the edges in the network are removed or not. $m_i=0$ means that edge $i$ is removed and $m_i=1$ means not. The fraction of removed edges is $q=1-\frac{\sum_{i}m_i}{M}$. Let $A_m=\{a_{ij}\}$ denote the edge adjacency matrix of network $G$, in which $a_{ij}=1$ if edge $i$ is adjacent to edge $j$, and $a_{ij}=0$ otherwise. We define the generalized degree of edge $i$ as $k_i=\sum_j a_{ij}$, which denotes the number of edges adjacent to $i$.

For END, we consider the order parameter $e_{i \rightarrow j}$, i.e., the probability that edge $i$ still belongs to the GCC after removing edge $j$~\cite{newmanMessagePassingMethods2023} \!\!. For $C\rightarrow 0$, it is required that the probability that a randomly selected edge belongs to the GCC is small enough after removing some of the edges, which is consistent with $\sum_{i,j} e_{i \rightarrow j}\rightarrow0$. We consider the solution for all edges $i,j$: $\{e_{i \rightarrow j}=0\}$, and its stability requires that the largest eigenvalue $\lambda(\mathbf{m};q)$ of the modified edge-based non-backtracking operator~\cite{gloverSpectralPropertiesNonbacktracking2021} $\hat{\mathcal{M}}$ is no larger than 1, where $\hat{\mathcal{M}}=\{\mathcal{M}_{k\rightarrow l,i\rightarrow j}\}$, $\mathcal{M}_{k\rightarrow l, i\rightarrow j}\equiv \frac{\partial e_{i\rightarrow j}}{\partial e_{k\rightarrow l}}|_{\{e_{i\rightarrow j}=0\}}$~\cite{moroneInfluenceMaximizationComplex2015} \!\!. Thus, END can be solved by finding the optimal configuration $\mathbf{m}^\ast$ that minimizes $\lambda(\mathbf{m};q)$ and satisfies the dismantling target. Due to the NP-hard nature of this optimization problem, solving it with complete accuracy is not practically feasible, so we consider the substitution of $\lambda(\mathbf{m};q)$: the cost energy function $\left|\boldsymbol{w}_{\ell}(\mathbf{m})\right|^2$ of edge collective influence for a finite $\ell$ (see Supplementary Information Sections~\ref{S1.1} and~\ref{S1.2} for details), abbreviated as \textbf{cost function} in the rest of the paper. When loops are neglected, we can approximate the local environment around any edge by a tree, the leading term of $\left|\boldsymbol{w}_{\ell}(\mathbf{m})\right|^2$ can be written as
\begin{equation}
E_{\ell}(\mathbf{m})=\sum_{i=1}^{M} z_{i} \sum_{j \in \partial \operatorname{Ball}(i, \ell)}\left(\prod_{k \in \mathcal{P}_{\ell}(i, j)} m_{k}\right) z_{j},
\label{equ_1}
\end{equation}
where $z_i=k_i-1$, $\ell$ is the length of the shortest path from edge $i$ to $j$, $\operatorname{Ball}(i, \ell)$ is the set of edges $j$ inside the ball with radius $\ell$ at the center of edge $i$, $\partial \operatorname{ Ball}(i, \ell)$ is the boundary of the ball, and $\mathcal{P}_{\ell}(i, j)$ is the shortest path of length $\ell$ between $i$ and $j$. 

Our goal shifts from finding the optimal set $E_r$ to finding the optimal configuration of edges $\mathbf{m}^\ast$ that minimizes the leading term of the cost function $E_{\ell}(\mathbf{m})$. Inspired by the work of CI~\cite{moroneInfluenceMaximizationComplex2015} \!\!,
we define the edge collective influence (ECI) of edge $i$ at the level of $\ell$ as follows:
\begin{equation}
ECI_\ell (i)=z_{i} \sum_{j \in \partial \operatorname{Ball}(i, \ell)}\left(\prod_{k \in \mathcal{P}_{\ell}(i, j)} m_{k}\right) z_{j}.
\label{equ_2}
\end{equation}

When no edge is removed, $\mathbf{m}=\mathbf{1}$; then we get
\begin{equation}
ECI_\ell (i)=z_{i} \sum_{j \in \partial \operatorname{Ball}(i, \ell)}z_{j}.
\label{equ_3}
\end{equation}

\begin{figure*}[htb]
    \centering
    \includegraphics[width=0.9\textwidth]{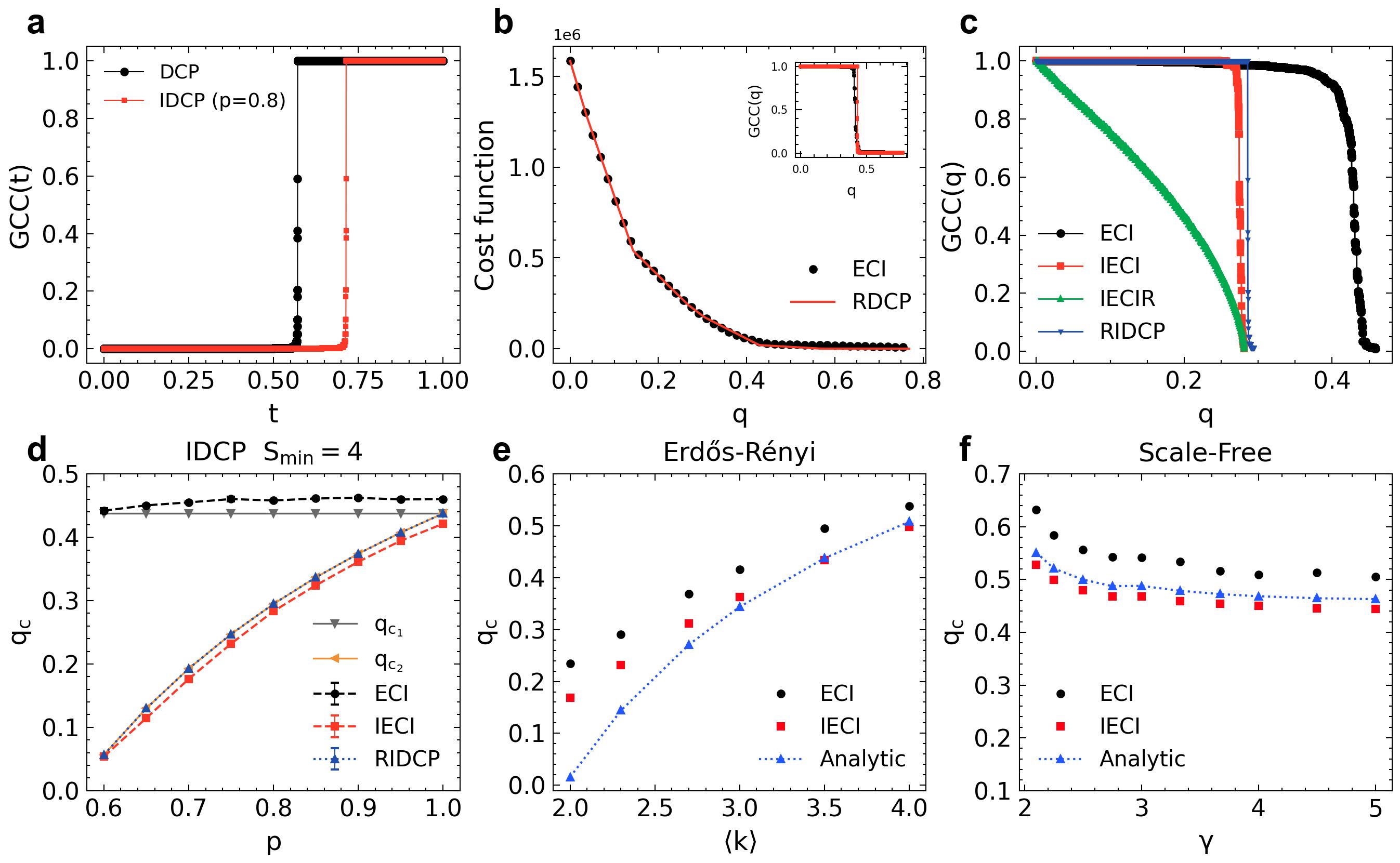}
    \caption{\textbf{Two explosive percolation models and their relation to the dismantling process.} \textbf{a,} $GCC(q)$ in the DCP process and the IDCP ($p=0.8$) process. 
    \textbf{b,} The descending process of the leading term $E_{\ell =1}$ of the cost energy function in the DCP network. ECI denotes the ECI dismantling process. RDCP denotes the reverse DCP process. Inset, $GCC(q)$ of each process in the DCP network.
    \textbf{c,} $GCC(q)$ for each process in the IDCP network with $S_{\min}=4$ and $p=0.8$. ECI, IECI and IECIR denote the dismantling processes of the corresponding algorithms. RIDCP denotes the reverse IDCP process.
    \textbf{d,} The corresponding dismantling thresholds of different processes in IDCP networks with different parameters $p$ ($N=10,000$, $M=17,500$, $S_{\min}=4$, degenerating to the ECI network if $p=1$, and error bars are s.e.m. over 20 realizations). RIDCP, ECI, and IECI represent the dismantling thresholds of the reverse IDCP, the ECI dismantling, and the IECI dismantling, respectively. $q_{c_1}$ and $q_{c_2}$ are the theoretical dismantling thresholds of the reverse DCP and the reverse IDCP, respectively. In all the cases here, the dismantling target is set to $C=0.01N$. \textbf{e, f,} The dismantling threshold $q_c$ of the ECI (black dot) and IECI (red square) algorithms and the analytic threshold ($q_{c_1}$, blue triangle) of the ECI algorithm versus average degree $\langle k \rangle$ (for Erdős–Rényi network) and degree exponent $\gamma$ (for scale-free network).}
    \label{fig:EP_total_new}
    \vspace*{-0.3cm}
\end{figure*}

To efficiently minimize $E_{\ell}(\mathbf{m})$, we propose an ECI algorithm: remove the most influential edge (with the highest ECI value) adaptively until the dismantling target ($ S_{GCC}\leq C$) is achieved. Specifically, the ECI algorithm dismantles the network step by step into chain-like structures, as shown in Fig.~\ref{fig:fig1}. Ideally, the network keeps connected until the critical point arrives, despite a continuous decrease in the density of network connectivity with the removal of edges. After reaching this critical point, the network undergoes a dramatic fragmentation, manifesting as systemic explosive vulnerability. The computational complexity of the ECI algorithm is $O(M\log M)$ by removing a finite proportion of edges at each step for $\ell = 1$ without a significant performance loss (see Supplementary Information Section~\ref{S1.4} for details). Hence, integrating the dismantling performance and time complexity, we adopt the ECI algorithm with $\ell =1$ unless otherwise specified. 

In practice, the network dismantling problem faces two distinct requirements under the same premise of dismantling target. One is the hidden dismantling, where $S_{GCC}$ scarcely decreases before the dismantling transition point, followed by a sharp drop. Here, to compensate for the effect of ignoring loop-like structures on ECI’s accuracy and enhance stealthiness at the transition point, we use the sum rule to improve the ECI algorithm, termed as improved ECI (IECI) algorithm (see Methods for details). With the sum rule, IECI involves greedily reinserting some edges into small clusters after ECI dismantling. Fig.~\ref{fig:model} showcases the best performance of ECI and IECI concerning hidden dismantling on synthetic networks, in particular their significant features of explosive dismantling and minimal removal cost, compared to all baseline strategies. IECI, through strategic reinsertion of secondary intra-cluster edges, identifies crucial inter-cluster edges more accurately, thereby further reducing the dismantling threshold and refining the stealthiness of ECI. 

The other is the fast dismantling, where $S_{GCC}$ diminishes swiftly throughout the whole dismantling process. For this purpose, we introduce an alternative approach, IECIR, based on IECI with a reordering strategy (see Methods for details). Specifically, IECIR reorders the edges identified by the IECI algorithm according to a particular rule, enabling a new removal sequence that facilitates a rapid decline of $S_{GCC}$. Notice that IECI and IECIR remove the same set of edges, just in a different order.

To validate the performance of IECI and IECIR on the END problem, we apply them to empirical networks and make comparison with classical dismantling algorithms and edge ranking algorithms. 
As shown in Fig.~\ref{fig:real}, given a dismantling target $C=0.01N$, IECI requires the removal of fewer edges in comparison to baseline methods and the reordering strategy additionally expedites the decline of the network's $S_{GCC}$. Furthermore, IECI yields a more homogeneous distribution of cluster sizes, leading to the emergence of numerous compact, isolated communities (Fig.~\ref{fig:real}b). This proves to be a valuable approach in the implementation of pandemic-related lockdowns, enabling the containment of global transmission while preserving the integrity of local communities and their dynamics. In addition, at an equivalent dismantling cost, IECIR demonstrates a superior ability to induce more effective fragmentation. (Fig.~\ref{fig:real}c ).

\section*{Relation between edge-based dismantling and reverse explosive percolation}

To unravel the underlying mechanism of ECI's explosive characteristic, we propose a dedicated explosive percolation model, the dual competitive percolation (DCP). Specifically, the DCP model starts with a null graph with $N$ nodes and then proceeds through two phases to add a total of $M (\geq N) $ edges. During Phase 1, at each step, the model selects the two smallest clusters in the current network and connects them by introducing an edge that links the nodes with the lowest degrees within each respective cluster. This iterative procedure persists until all nodes are connected, forming an unbranched chain composed of $N$ nodes and $N-1$ edges. During Phase 2, at each step, the model selects two nodes, both meeting the criterion that the sum of the generalized degrees of their neighboring edges is minimized, and subsequently introduces an edge connecting these selected nodes. This iterative procedure persists until the network comprises a total of $M$ edges. In light of the global competition in cluster size, the DCP process manifests a discontinuous transition at the percolation threshold~\cite{naglerImpactSingleLinks2011a}, see Fig.~\ref{fig:EP_total_new}a. Significantly, the DCP model introduces a competitive mechanism that extends beyond clusters to encompass individual nodes, setting it apart from the explosive percolation discussed in Ref.~\cite{naglerImpactSingleLinks2011a} \!\!. 

Specifically, let $t$ represent the fraction of added edges, and as the DCP process unfolds, $t$ increases from $0$ to $1$. Given the number of nodes $N=2^n (n\rightarrow \infty)$ and the average degree $\langle k\rangle$, the DCP process transitions to Phase 2 at $t=\frac{N-1}{M}$. Then the percolation threshold, at which a giant connected component emerges, is 
\begin{equation}
    t_{c_1}=\frac{N-1} {M} =\frac{N-1}{\frac{N\langle k\rangle}{2}}\approx \frac{2}{\langle k\rangle}, (N \rightarrow \infty).
\end{equation}

When reversing the DCP process, we can capture an explosive dismantling process, wherein a network with $N$ nodes and $M$ edges is systematically dismantled until reaching the predefined dismantling target $C$ (Fig.~\ref{fig:EP_total_new}b inset). At this juncture, the fraction of removed edges, referred to as the dismantling threshold, is approximately given by $q_{c_1}\approx TP_1+\frac{256}{N \langle k\rangle}$ (see Supplementary Information Section~\ref{SI:pt_dt} for details). Here, the dismantling transition point $TP_1$, marking the onset of significant network disintegration, is $TP_1=1-t_{c_1}\approx 1-\frac{2}{\langle k\rangle}$.

Similarly, we propose an improved dual competitive percolation (IDCP) model (see Methods for details), which increases the density of small communities within the network on the basis of the DCP model and is also an EP model with discontinuous transition~\cite{naglerImpactSingleLinks2011a} (Fig.~\ref{fig:EP_total_new}a). For the number of nodes $N=2^n (n\rightarrow \infty)$ and the average degree $\langle k\rangle$, according to the mean field theory~\cite{barabasiMeanfieldTheoryScalefree1999} \!\!, the percolation threshold $t_{c_2}$ of the IDCP satisfies $Mt_{c_2}=\frac{N-1}{p}$, so 
\begin{equation}
    t_{c_2}=\frac{N-1}{Mp}=\frac{N-1}{\frac{N\langle k\rangle}{2}p}\approx \frac{2}{\langle k \rangle p}, (N \rightarrow \infty).
\end{equation}
In addition, its dismantling transition point and dismantling threshold are $TP_2=1-t_{c_2} \approx 1-\frac{2}{\langle k \rangle p}$ and $q_{c_2}\approx TP_2+\frac{256}{N \langle k\rangle}$, respectively, for $C=0.01N$ (see Supplementary Information Section~\ref{SI:pt_dt} for details). 

During Phase 1, the DCP network consists of chain-like clusters due to its formation rule. Consequently, by combining the spectral properties of the non-backtracking matrix with the characteristics of the cost function in this chain-like network, we demonstrate that, during Phase 1, the DCP process ensures that the cost function $\left|\boldsymbol{w}_{\ell=1}(\mathbf{m})\right|^2$ of the network, i.e. $E_{\ell=1}(\mathbf{m})$ of the chain-like network, grows at the slowest speed at each step (see Methods for details). Conversely, the ECI algorithm endeavors to minimize $E_{\ell=1}(\mathbf{m})$ at the swiftest rate during each edge removal step. Consequently, at the dismantling threshold, the network attains a chain-like configuration, representing the network state characterized by the lowest cost function. Accordingly, the uniformly sized chain-like clusters produced by the ECI algorithm (Fig.~\ref{fig:real}b) align with the powder keg~\cite{friedmanConstructionAnalysisRandom2009} within the reverse DCP process. More precisely, their trajectories in reducing the cost function closely mirror each other, effectively illustrating that the reverse DCP process can faithfully replicate the ECI dismantling process (Fig.~\ref{fig:EP_total_new}b). Thus, the transparent edge addition rules employed by DCP offer insight into the intricate microscopic mechanisms that underlie the explosive characteristics of ECI.

The relation between the ECI dismantling process and the reverse DCP process reveals that the influential edges identified by ECI can cause a dramatic transition over the network's global functionality. Moreover, this relationship provides an approximate prediction of the ECI dismantling threshold $q_{c_{ECI}}$, further validating the robustness of ECI even in the presence of varying network models and topological parameters (Figs.~\ref{fig:EP_total_new}d, e and f). Notice that $q_{c_{ECI}}$ is slightly larger than $q_{c_1}$ due to the presence of loops within networks, as ECI may split some of them off from the network before reaching the dismantling transition point (see Supplementary Information Section~\ref{S5.} for in-depth discussion).

Benefiting from the sum rule, IECI presents two desirable outcomes. First, the dismantling process of IECI becomes sharper at the dismantling transition point, exhibiting enhanced stealthiness, as shown in Figs.~\ref{fig:model} and~\ref{fig:EP_total_new}c. Second, it effectively reduces the dismantling threshold, bringing it into closer alignment with analytical results, see Figs.~\ref{fig:EP_total_new}e and f. On the other hand, IDCP, by considering community structures that approximately correspond to clusters produced by the IECI algorithm (Fig.~\ref{fig:real}b), possesses the capability to delay the percolation threshold, as shown in Fig.~\ref{fig:EP_total_new}a. Its reverse process also closely replicates the IECI dismantling process, as depicted in Fig.~\ref{fig:EP_total_new}c. Thus, IDCP also effectively models IECI's explosive dismantling characteristic.

\section*{Discussion}

This study has adeptly tackled the identification of collective behaviors that manifest explosive vulnerability by introducing ECI and its refined version, IECI. These advancements have illuminated the intricate mechanisms underlying the eruption of large-scale disaster in networks, offering invaluable insights. Explosive vulnerability, due to its early concealment, erupts with exceptional intensity. ECI and IECI showcase the dual facets of optimal and explosive percolation. Subsequently, we propose the dual competitive percolation model, which generates explosive characteristic through explicit rules, revealing how collective behavior optimizes the network's cost energy function by inhibiting connectivity growth simultaneously at both cluster and node levels, highlighting the deeper intrinsic connection between optimal and explosive percolation.

This work signifies a substantial stride in the application of explosive percolation theory within the realm of network science, laying the foundation for comprehending and controlling explosive phenomena in complex networks. Specifically, differing from the assertion in Ref~\cite{clusellaImmunizationTargetedDestruction2016} \!\!, ECI demonstrates the feasibility of employing a universal optimal strategy based on a single score. In addition, both the ECI and IECI algorithms illustrate that systemic explosive vulnerability, potentially alongside other global features, arises from specific patterns of edge collective behavior. In contrast to individual edges, these distinct modes of collective behavior take on a more critical role. Once these modes are disrupted, systemic explosive vulnerability wanes, even if the same edges are targeted, as exemplified by IECIR. The efficacy demonstrated by IECIR in defusing explosive vulnerability outlines an effective route toward enhancing the security and robustness of such systems. In practical terms, our findings reverberate across diverse domains, including fortifying power grids against large-scale cascade failures, devising targeted vaccination strategies to curtail the spread of infectious diseases, and mitigating the pernicious propagation of misinformation within networked systems.

Looking ahead, comprehending and uncovering intricate dependencies among higher-order structures~\cite{battiston2021physics} will be pivotal in revealing explosive behavior within networks. Our extension of optimal percolation theory to edge structures using line graphs is expected to inspire its application to higher-order structures. Additionally, investigating whether different types of network topologies give rise to varying mechanisms of explosive vulnerability and how to tailor personalized dismantling strategies based on network topology features warrant further exploration. We anticipate that more research will build upon this foundation to delve deeper into this vital yet challenging aspect of network science.


\FloatBarrier
\bibliographystyle{unsrt}
\bibliography{ECI_ref}

\renewcommand\thesection{M\arabic{section}}
\renewcommand\thesubsection{M\arabic{section}.\arabic{subsection}}
\setcounter{section}{0}
\setcounter{equation}{0}
\setcounter{figure}{0}
\renewcommand{\theequation}{M\arabic{equation}}
\renewcommand{\thefigure}{M\arabic{figure}}

{\footnotesize
\section*{Methods}

\emph{\underline{The sum rule for ECI}}. 
The sum rule is used to mitigate the detrimental impact of the loop-like structures on ECI algorithm. Specifically, given a set of edges $E_{r}$ removed by a dismantling algorithm and the dismantled network $G^*(V,E\backslash E_{r})$, let $SCC$ denote the set of connected components in network $G^*$, and set the sampling size $r$ as 100 based on the performance and complexity considerations of the sum rule (see Supplementary Information Section~\ref{S1.5}). The procedure of the sum rule is as follows:

\begin{enumerate}
\item Select $r$ pairs of adjacent connected components $r-CCs=\{(CC_{11},CC_{12}),\dots,(CC_{r1},CC_{r2})\}$ in $SCC$ that satisfy the merging condition (the sum of the number of nodes is not greater than $C$);

\item Compute the sum rule score $\boldsymbol{\alpha}=\{\alpha _1,\dots ,\alpha _r\}$ for each pair of connected components in $r-CCs$, which is defined as:
   \begin{equation}
       \alpha _i=\frac{nn_i}{adj(CC_{i1},CC_{i2})}, i\in \{1,2,\dots,r\}.
   \end{equation}
   $nn_i=nn(CC_{i1})+nn(CC_{i2})$, where $nn(CC_{i1})$ and $nn(CC_{i2})$ denote the number of nodes of the first and second connected component in the $i$th pair, respectively. $adj(CC_{i1},CC_{i2})$ denotes the number of edges between two connected components in the $i$th pair in the original network;

\item Select the pair $(CC_{j1},CC_{j2})$ with the smallest $\alpha _i$, merge these two connected components, and update $SCC$ and $E_{r}$;

\item Repeat steps 1-3 until no component satisfies the merging condition. The obtained $E_{r}$ is the final result.
\end{enumerate}

\emph{\underline{The reordering technique}}. 
In network science, we commonly use the robustness metric $R$ to qualify the performance of dismantling algorithms. For a given network $G(V,E)$, $R$ is defined as $R=\frac{1}{N}\sum _{m_r=1}^{M_{r}}S(m_r)$~\cite{schneider2011mitigation} \!\!, where $N=|V|$ is the total number of nodes in the network, $M_r$ is the total number of removed edges, and $S(m_r)$ is the value of $S_{GCC}$ after removing the first $m_r$ edges. A smaller $R$ indicates a more efficient algorithm and for a given $E_{r}$, the final $R$ will be different for different removal orders. Here, we propose a reordering technique for $E_{r}$ such that $S_{GCC}$ can drop as fast as possible, thus improving the dismantling efficiency. 

Suppose that $\Delta x$ edges are removed in certain steps of the dismantling process, then $S(m_r)$ becomes $S(m_r+\Delta x)$, so that the variation in $S_{GCC}$ is $\Delta y=S(m_r)-S(m_r+\Delta x)$. Then our goal is to remove the appropriate edges to maximize the score $\beta = \frac{\Delta y}{\Delta x}$, according to the definition of $R$. Specifically, given a network $G$ and the set of connected components $SCC$ in the remaining network $G^*(V,E\backslash E_{r})$ dismantled by one dismantling algorithm (e.g. IECI algorithm), we consider each connected component $CC_i$ in $SCC$ as a unit, and assume that to separate $CC_i$ with $\Delta y_i$ nodes from the original network $G$, we need to remove $\Delta x_i$ edges. Then the score $\beta _i$ of $CC_i$ is $\beta _i=\frac{\Delta y_i}{\Delta x_i}$. The reordering technique works as follows:

\begin{enumerate}
\item Compute the score $\beta _i$ of each $CC_i$ in $SCC$ in the current network $G$;
\item Separate the $CC_i$ with the largest $\beta _i$ from network $G$, and then update $SCC$ and $G$;
\item Repeat steps 1 and 2 until all connected components in $SCC$ are separated;
\item Reorder $E_r$ according to the separation order of connected components in $SCC$ to get the final removal order.
\end{enumerate} 

\emph{\underline{Improved dual competitive percolation}}.  
Given the number of nodes $N$ and the number of edges $M$, the improved dual competitive percolation (IDCP) starts with a network with $N$ nodes and 0 edges, and all nodes are initially divided into fixed communities of size $S_{\min}$. During Phase 1 of the IDCP process, at each step, either perform step 1 with probability $p$ or perform step 3 with probability $1-p$. Once all nodes are merged into one connected component, it turns to Phase 2. During Phase 2, at each step, either perform step 2 with probability $p$ or perform step 3 with probability $1-p$. The three steps are as follows:

\begin{enumerate}
\item Select the two smallest clusters in the current network and connect them by adding an edge between the two nodes with the smallest degrees in each cluster;
\item Add an edge randomly to the current network;
\item Randomly select one of the communities and randomly add an edge to it.
\end{enumerate}

Note that the division of nodes into fixed communities actually occurs during step 1: when the size of each connected component in the network simultaneously reaches or exceeds $S_{\min}$ for the first time, we regard these components as fixed communities in the network. Here, to ensure that a certain number of nodes are added in step 3, the IDCP model assumes that this division is pre-determined. Finally, there are $N-1$ edges added in step 1, $Mp-N+1$ edges added in step 2, and $M(1-p)$ edges added in step 3. The size of fixed communities $S_{\min}$ is generally set as 4, and $p$ determines the average number of edges within these communities.

\emph{\underline{Lemma and theorem}}. 

Below are two related lemmas, followed immediately by a proof of a theorem on the variation of the cost function of the DCP network during Phase 1. 
It is important to note that our proof utilizes the topological correspondence between network $G$ and its line graph $L(G)$~\cite{hararySome1960} \!\! (see Supplementary Information Section~\ref{S1.3} for details). Each node of $L(G)$ represents the corresponding edge of $G$, and two nodes of $L(G)$ are adjacent if and only if their corresponding edges share a common endpoint in $G$.
Lemma 1 shows the topological characteristic of the DCP network in Phase 1. Lemma 2 shows that the cost function of a chain-like network is always smaller than that of a network whose line graph is loop-containing, which provides a theoretical basis for the finitization of $\ell$. With the help of these two lemmas, we can easily prove Theorem 1, demonstrating that the cost function $\left|\boldsymbol{w}_{\ell=1}(\mathbf{m})\right|$ of the DCP network grows the slowest during Phase 1.

\begin{lemma}
    Prior to the completion of Phase 1, the network constructed by the DCP model exclusively consists of isolated nodes and unbranched chains.
\end{lemma}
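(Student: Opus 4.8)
The plan is to argue by induction on the number of edges added during Phase 1, maintaining two invariants throughout: (i) the graph is a forest (acyclic), and (ii) every vertex has degree at most $2$. Together these are equivalent to the conclusion, since a connected acyclic graph of maximum degree at most $2$ is precisely an unbranched chain, and an isolated node is the degenerate case (a chain on a single vertex). The base case is immediate: before any edge is added the DCP network consists of $N$ isolated nodes, a forest of maximum degree $0$, so both invariants hold.

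For the inductive step, I would suppose that after $t$ edge additions (with $t < N-1$, so Phase 1 is not yet complete) both invariants hold, and examine the $(t{+}1)$-th step. By the Phase 1 rule the model selects the two smallest clusters $C_a$ and $C_b$ and joins them by an edge between a minimum-degree vertex $u \in C_a$ and a minimum-degree vertex $v \in C_b$. Since $C_a$ and $C_b$ are distinct connected components, the new edge $uv$ links two different trees and hence cannot close a cycle, so invariant (i) survives. For invariant (ii) I would invoke the inductive hypothesis that each of $C_a$ and $C_b$ is an isolated node or a path, whence its minimum degree is at most $1$ (a path endpoint, or the vertex of an isolated node). Adding $uv$ raises only $\deg(u)$ and $\deg(v)$, each from at most $1$ to at most $2$, and leaves every other degree unchanged, so no vertex exceeds degree $2$.

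With both invariants restored, the merged component $C_a \cup C_b \cup \{uv\}$ is connected, acyclic, and of maximum degree at most $2$, i.e. again an unbranched chain, while all untouched components remain isolated nodes or paths. Induction then gives the invariants after every Phase 1 step, establishing the lemma for all configurations prior to the completion of Phase 1.

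The only delicate point is the claim that the minimum-degree vertex of each selected cluster has degree at most $1$; this is exactly where the path structure is used, since a nontrivial path has two degree-$1$ endpoints with all interior vertices of degree $2$. Everything else is routine case-checking. I would also remark that the argument relies solely on the merging rule, namely joining minimum-degree vertices of two distinct components, and not on which two clusters happen to be smallest, so the conclusion is insensitive to how ties in cluster size are resolved.
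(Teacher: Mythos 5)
Your proof is correct and follows essentially the same route as the paper's: an induction over Phase~1 steps observing that the new edge joins minimum-degree (leaf) vertices of two distinct chain-like clusters, so acyclicity and maximum degree at most $2$ are preserved. Your version is simply a more explicit formalization of the paper's brief argument, with the two invariants stated separately.
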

\begin{proof}
    At the beginning of the DCP process, there are only isolated nodes in the network, and connecting isolated nodes will generate the simplest chain-like components. After this, during Phase 1, the DCP process connects the nodes with the smallest degree (i.e., leaf nodes) of different clusters each time, so the newly connected clusters are still chain-like clusters. Therefore, this lemma holds.
\end{proof}

\begin{lemma}\label{lem2}
    The cost function $\left|\boldsymbol{w}_{\ell}(\mathbf{1})\right|$ of a network whose line graph is loop-containing is larger than that of a chain-like network.
\end{lemma}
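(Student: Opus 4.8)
The plan is to carry the whole comparison into the line graph $L(G)$, using the stated correspondence that $G$ is chain-like precisely when $L(G)$ is a disjoint union of simple paths, i.e.\ precisely when $L(G)$ is acyclic; Lemma~\ref{lem2} is then the contrapositive comparison between an acyclic $L(G)$ and one carrying a cycle, normalized to a common number of edges $M$ (the number of nodes of $L(G)$), which is the setting relevant to the per-step analysis of DCP. A useful first reduction is to replace $\left|\boldsymbol w_\ell(\mathbf 1)\right|$ by its leading term $E_\ell(\mathbf 1)=\sum_i z_i\sum_{j\in\partial\operatorname{Ball}(i,\ell)}z_j$: since $\left|\boldsymbol w_\ell(\mathbf 1)\right|^2$ counts non-backtracking walks of length $\ell$ in $L(G)$ weighted by endpoint factors $z_iz_j$, it equals $E_\ell(\mathbf 1)$ exactly on a forest (every non-backtracking walk is the unique shortest path) and strictly exceeds it on a loop-containing $L(G)$ by the non-negative contribution of the walks that wind around cycles. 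Thus it suffices to prove $E_\ell^{\mathrm{loop}}(\mathbf 1)>E_\ell^{\mathrm{chain}}(\mathbf 1)$, and at the level $\ell=1$ used throughout the paper this is the non-negative quadratic form $E_1(\mathbf 1)=\mathbf z^{\top}A_m\mathbf z=\sum_{i,j}a_{ij}z_iz_j$ with $z_i=k_i-1$.

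First I would fix the chain baseline. In any path component of $L(G)$ every node has degree at most $2$, so $z_i\in\{0,1\}$ and only adjacent interior nodes contribute; a single path on $M$ nodes gives $E_1(\mathbf 1)=2(M-3)$ for $M\ge 3$ and $0$ otherwise, and breaking it into several paths only deletes positive terms, so the single connected path is the costliest acyclic configuration. Isolated edges, whose boundary is empty, contribute nothing, which disposes of the only sign hazard $z_i=-1$.

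Next I would show that any cycle in $L(G)$ raises the cost at fixed $M$. A cycle in $L(G)$ is inherited in $G$ either from a vertex of degree $\ge 3$ or from a genuine cycle of $G$; in either case $L(G)$ acquires a set of mutually adjacent nodes, each of degree $\ge 2$ and hence carrying $z\ge 1$. These furnish adjacencies $a_{ij}=1$ with $z_iz_j\ge 1$ that are simply absent from a path — a degree-$3$ vertex of $G$, for example, turns three line-graph nodes into a triangle and raises their degrees, adding positive terms while lowering none. Because $A_m\ge 0$ entrywise and $\mathbf z\ge 0$ on every non-isolated node, $E_1(\mathbf 1)$ is monotone in both the adjacency pattern and the degree sequence. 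Even the mildest loop witnesses this strictly: a single cycle $L(C_M)=C_M$ has all $z=1$ and gives $E_1(\mathbf 1)=2M$, exceeding the baseline $2(M-3)$ by a constant margin, while branch vertices of $G$ push the surplus far higher.

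The hard part will be strictness for general $\ell$, where the naive monotonicity fails: introducing a cycle can shorten distances and eject a node from $\partial\operatorname{Ball}(i,\ell)$, so one cannot claim termwise that every boundary sum only grows. The clean route is to argue at the level of non-backtracking walks, whose counts are genuinely monotone under the structural enrichment producing a loop, together with the elementary fact that a cycle in $L(G)$ can never be assembled from degree-$1$ nodes and so always contributes at least one edge both of whose endpoints carry $z\ge 1$; this yields a strictly positive surplus absent from any forest, the explicit $\ell=1$ witness, which the walk expansion propagates to all $\ell$. The most robust source of this strictness is the per-step comparison underlying Theorem~1 itself — extending a chain versus attaching the same new edge at an interior node — since both start from a common graph and add one edge, but only the latter creates a triangle and higher degrees, making the surplus manifestly positive. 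Combined with the baseline identification of the single path as the costliest acyclic configuration, this gives the strict inequality against every chain-like network, and as a by-product justifies the finitization of $\ell$: the chain cost stays $O(M)$ and essentially $\ell$-independent, whereas the loop-containing cost inherits the super-linear, $\ell$-growing behaviour set by the non-backtracking spectral radius $\lambda>1$, so the two are already separated at the smallest $\ell$.
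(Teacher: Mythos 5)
Your route is genuinely different from the paper's, and it has a real gap. The paper proves Lemma~\ref{lem2} in two lines by passing to the spectrum: the cost function tracks the largest eigenvalue of the non-backtracking matrix of $L(G)$ via $\left|\boldsymbol{w}_{\ell}(\mathbf{1})\right|\sim e^{\ell\log\lambda}$; for a loop-containing graph that eigenvalue is at least $1$, while for a forest (the line graph of a union of unbranched chains) the non-backtracking matrix is nilpotent and the eigenvalue is $0$, so the comparison follows with no normalization of sizes and no connectivity assumption. You instead try to prove a finite-$\ell$, fixed-$M$ inequality for the leading term $E_{\ell}(\mathbf{1})$, and that reformulated claim is false. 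Your monotonicity step (``$E_1(\mathbf{1})$ is monotone in both the adjacency pattern and the degree sequence'') only applies when one graph's adjacency structure entrywise dominates the other's, which is not guaranteed for two arbitrary networks sharing only the edge count $M$. Concretely, take $G$ to be a triangle together with $M-3$ disjoint edges: $L(G)$ is a triangle plus isolated nodes, so $E_1(\mathbf{1})=6$, which is smaller than your chain baseline $2(M-3)$ as soon as $M\geq 7$. So the $\ell=1$, fixed-$M$ statement you set out to prove does not hold in general; the lemma is only true in the spectral, large-$\ell$ sense the paper intends (the chain's cost decays to zero while any loop keeps it bounded away from zero), or in the restricted per-step setting of Theorem~\ref{thm1}, where both candidates are obtained from a common graph by adding one edge and termwise monotonicity is actually available.

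The second gap is one you flag yourself: strictness for general $\ell$. The claim that non-backtracking walk counts are ``genuinely monotone under the structural enrichment producing a loop'' again presupposes that the loop-containing graph contains the chain-like one as a subgraph, which Lemma~\ref{lem2} does not assume. The repair is to abandon $E_{\ell}$ and argue directly on $\left|\boldsymbol{w}_{\ell}(\mathbf{1})\right|$: a cycle in $L(G)$ supports closed non-backtracking walks of every length, so the walk count is bounded below by a positive constant for all $\ell$ (equivalently $\lambda\geq 1$), whereas the chain's $\left|\boldsymbol{w}_{\ell}(\mathbf{1})\right|$ vanishes once $\ell$ exceeds the diameter of its line graph. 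That is precisely the paper's argument. Your $\ell=1$ computations ($2M$ for a cycle versus $2(M-3)$ for a path, and the observation that every edge of a cycle in $L(G)$ joins two nodes with $z\geq 1$) are correct and are genuinely useful for the per-step comparison in Theorem~\ref{thm1}, but they do not establish the lemma as stated.
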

\begin{proof}
    The cost function of the network characterizes the magnitude of the maximum eigenvalue of the non-backtracking matrix of the corresponding line graph (see Supplementary Information Section~\ref{S1.3}). The eigenvalue of the non-backtracking matrix of the loop-containing network is greater than or equal to 1, while the eigenvalue of the non-backtracking matrix of the tree network is 0~\cite{gloverSpectralPropertiesNonbacktracking2021} \!\!. Therefore, the cost function of a chain-like network, whose line graph is a tree network, is smaller than that of a network whose line graph is loop-containing. 
\end{proof}

Remarkably, the line graphs of tree networks with multiple branches and loop-containing networks are loop-containing (see Supplementary Information Fig.~\ref{fig:S0}). For chain-like networks, since the eigenvalue of the edge-based non-backtracking matrix is 0, i.e., $\left|\boldsymbol{w}_{\ell}(\mathbf{m})\right|^2$ is 0 as $\ell \rightarrow \infty$, we consider the cost energy function $\left|\boldsymbol{w}_{\ell}(\mathbf{m})\right|^2$ of the finite case of $\ell$ ($\ell =1$).

\begin{theorem}\label{thm1}
    For $\ell=1$, the cost function $\left|\boldsymbol{w}_{\ell}(\mathbf{m})\right|^2$ of the DCP network grows the slowest during Phase 1.
\end{theorem}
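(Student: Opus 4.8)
The plan is to use the line-graph correspondence to turn $E_{\ell=1}(\mathbf{1})$ into an explicit functional of the cluster sizes on the configurations that Lemma~1 guarantees, and then to verify that the DCP edge-addition rule minimizes the per-step increment of this functional. First I would specialize the leading term~(\ref{equ_1}): for $\ell=1$ the path $\mathcal{P}_1(i,j)$ carries no interior edges, so the product equals $1$, $\partial\operatorname{Ball}(i,1)$ is simply the set of line-graph neighbors of $i$, and hence $E_{1}(\mathbf{1})=\sum_i z_i\sum_{j\in\partial\operatorname{Ball}(i,1)}z_j=2\sum_{\{i,j\}\in E(L(G))}z_i z_j$ with $z_i=k_i-1$. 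By Lemma~1 the Phase-1 network is at every step a disjoint union of isolated nodes and unbranched chains, so $L(G)$ is a disjoint union of paths. Evaluating the double sum on a single chain of $s$ nodes (in which interior edges have $z=1$ and the two end edges have $z=0$) yields a per-cluster cost $g(s)$ that vanishes for short chains and increases linearly afterwards; the only property of $g$ I will actually need is that it is convex and non-decreasing in the chain length.

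Next I would classify the admissible single-edge additions at a given step. A DCP move connects two leaf (minimum-degree) nodes belonging to two different clusters, which merges two chains into one longer chain and keeps $L(G)$ a union of paths. Every other addition either closes a cycle inside one chain or attaches an edge at a node of degree $\ge 2$, thereby creating a branch; in both cases the three edges meeting at the new degree-$3$ vertex (or the edges around the new cycle) become mutually adjacent, so $L(G)$ acquires a loop. Here I invoke Lemma~\ref{lem2}: relative to a chain-preserving merge producing the same edge count, any loop in the line graph strictly raises the cost, so no branch- or cycle-creating move can beat the optimal chain merge. For full rigour at finite $\ell=1$ I would back this up with the short local computation around the inserted edge, which already shows, e.g., that attaching to an interior node or closing a short cycle produces an increment of order $8$--$10$, whereas a chain merge of two small clusters produces increment $0$.

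It then remains to minimize the increment among chain merges. Merging clusters of lengths $p$ and $q$ into one of length $p+q$ changes the cost by $\Delta(p,q)=g(p+q)-g(p)-g(q)$. Because $g$ is convex and non-decreasing, the discrete difference $\Delta(p+1,q)-\Delta(p,q)=[g(p+q+1)-g(p+q)]-[g(p+1)-g(p)]\ge 0$, so $\Delta$ is non-decreasing in each argument; it is therefore minimized by choosing the two clusters of smallest size, which is exactly the DCP rule. Combining the three steps shows that at every step of Phase~1 the DCP move realizes the minimal possible increase of $E_{\ell=1}$, i.e.\ the cost function grows the slowest, proving Theorem~\ref{thm1}.

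The step I expect to be the main obstacle is the comparison in the second paragraph at finite $\ell=1$: Lemma~\ref{lem2} is a spectral statement about $\ell\to\infty$ and about total costs, whereas the theorem concerns the one-step increment of the $\ell=1$ functional. Since inserting an edge raises the generalized degree $k_i$ (hence $z_i$) of every edge incident to its two endpoints, I cannot simply read off the increment from $g$ alone for the non-chain moves; I must account for all the perturbed products $z_i z_j$ in the neighborhood of the new edge. My resolution is to bound this local increment directly and show it is always at least the minimal chain-merge increment, with Lemma~\ref{lem2} supplying the qualitative guarantee that the loop contributions are strictly positive.
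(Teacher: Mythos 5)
Your proposal follows essentially the same route as the paper's proof: Lemma 1 reduces Phase 1 configurations to disjoint chains, Lemma 2 is used to exclude moves whose line graph acquires a loop, and the minimality of the per-step increment is attributed to connecting leaf nodes of the two smallest clusters. The difference is that you make rigorous the two steps the paper only asserts. First, you compute the per-chain cost explicitly (indeed $g(s)=\max\bigl(0,\,2(s-4)\bigr)$ for a chain on $s$ nodes) and derive smallest-cluster optimality from convexity of $g$, whereas the paper simply states that leaf nodes have the smallest degree and the smallest boundary ball and concludes the increment is minimal. Second, you correctly flag that Lemma 2 is a spectral statement about total costs as $\ell\to\infty$ and cannot by itself bound the one-step $\ell=1$ increment of a loop-creating move against that of a chain merge; the paper invokes Lemma 2 for exactly this purpose without addressing the mismatch. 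Your proposed fix (a direct local computation of $\sum z_iz_j$ around the inserted edge) is the right one, but note the case analysis is slightly delicate: closing a $3$-chain into a triangle costs only $6$, which is \emph{less} than the generic increment $8$ of merging two chains of size $\ge 4$, so to close the argument you must observe that whenever a $3$-cluster is present the minimal available chain merge also costs at most $6$; with that check the tie-breaking works out and the theorem holds as stated.
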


\begin{proof}
    According to Lemma 1, the DCP network is a chain-like network during Phase 1, and the cost function $\left|\boldsymbol{w}_{\ell=1}(\mathbf{m})\right|^2$ is simplified to its leading term $E_{\ell=1}$. According to Lemma 2, the cost function of the DCP network during Phase 1 is smaller than that of other networks whose line graph is loop-containing. (1) For the DCP process, each added edge during Phase 1 connects the leaf nodes from the two smallest clusters. (2) For the leaf node $i$, its $\partial \operatorname{Ball}(i,\ell=1)$ contains the smallest number of nodes and its degree is the smallest. These two points indicate that the increment of the cost function $E_{\ell=1}$ resulting from connecting leaf nodes in the chain structure is minimal. Thus, the DCP process makes the growth of $E_{\ell=1}$, also $\left|\boldsymbol{w}_{\ell=1}(\mathbf{m})\right|^2$ for chain-like networks, the slowest during Phase 1.
\end{proof}

\textbf{\small Code availability}. All codes supporting our findings are available from the GitHub repository: \url{https://github.com/PPNew1/Edge_Collective_Influence}.

\textbf{\small Data availability}.
All data supporting our findings are available from the corresponding author upon reasonable request.

\textbf{\small Acknowledgments}. 
The authors are grateful for the support from the STI 2030--Major Projects (2022ZD0211400), the National Natural Science Foundation of China (Grant No. T2293771), the China Postdoctoral Science Foundation (2022M710620), Sichuan Science and Technology Program (2023NSFSC1919, 2023NSFSC1353), the Project of Huzhou Science and Technology Bureau (2021YZ12), the UESTCYDRI research start-up (U032200117), Young Leading Talents of Nantaihu Talent Program in Huzhou (2023) and the New Cornerstone Science Foundation through the XPLORER PRIZE.

\textbf{\small Author contributions}. 
Peng Peng, Tianlong Fan, Xiao-Long Ren and Linyuan Lü conceived the research and designed the experiments. Peng Peng developed the mathematical framework, carried out the numerical experiments, worked out the theory and wrote the draft with the help of Tianlong Fan and Xiao-Long Ren. Tianlong Fan, Xiao-Long Ren and Linyuan Lü edited the paper. All authors analyzed the data and discussed the results. The authors would like to thank Dr. Shuqi Xu for insightful discussions and suggestions.}



\newpage
\onecolumngrid

\titleformat{\section}{\filcenter\normalfont\bfseries}{\Roman{section}.}{1em}{}
\titleformat{\subsection}{\filcenter\normalfont\bfseries}{\Alph{subsection}.}{1em}{}

\renewcommand\thesection{\Roman{section}}
\renewcommand\thesubsection{\thesection\Alph{subsection}}

\setcounter{section}{0}
\setcounter{equation}{0}
\setcounter{figure}{0}
\setcounter{page}{1}
\renewcommand{\theequation}{S\arabic{equation}}
\renewcommand{\figurename}{\textsc{Extended Data Figure}}
\renewcommand{\thefigure}{ED\arabic{figure}}
\renewcommand{\thepage}{S\arabic{page}}

\newpage

\begin{quote}
\centering 
{\large \bf Unveiling Explosive Vulnerability of Networks through Edge Collective Behavior}\vspace*{+0.1cm}\\
{\normalsize (\underline{\textsc{Supplementary Information/Extended Data}})}\vspace*{+0.25cm}\\
{\textsc{Peng Peng$^{1}$, Tianlong Fan$^{2}$, Xiao-Long Ren$^{3}$, Linyuan L\"{u}$^{2,1}$}}\\
{\small \em $^{1}$Institute of Fundamental and Frontier Studies, University of Electronic Science and Technology of China, 611731 Chengdu, P.R. China}\\
{\small \em $^{2}$School of Cyber Science and Technology, University of Science and Technology of China, 230026 Hefei, P.R. China}\\
{\small \em $^{3}$Yangtze Delta Region Institute (Huzhou), University of Electronic Science and Technology of China, 313001 Huzhou, P.R. China}\\
{\small (Dated: \today)}
\end{quote}

\section{Details of edge collective influence and its expansion}\label{S1.}

Most past dismantling algorithms~\citesec{moroneInfluenceMaximizationComplex2015,braunsteinNetworkDismantling2016,clusellaImmunizationTargetedDestruction2016,mugishaIdentifyingOptimalTargets2016,zdeborovaFastSimpleDecycling2016,tianArticulationPointsComplex2017,liuOptimizationTargetedNode2018} focused on the node-based network dismantling problem, aiming to reduce the size of the giant connected component (GCC) in the network by removing the least number of nodes. However, Ren et al.~\citesec{renGeneralizedNetworkDismantling2019} recognized that the cost of removing nodes varies, with nodes adjacent to more edges costing more, and proposed a generalized network dismantling (GND) algorithm to fragment the network with minimal removal cost. This algorithm minimizes the number of edges removed when removing nodes, thereby reducing the overall cost of the dismantling process. Here, we focus directly on the edge-based network dismantling problem (END) to develop a cost-efficient dismantling algorithm. Although END, like GND, also takes the number of removed edges as the cost of network dismantling, it differentiates from GND in its approach: while GND removes all the edges between the attacked node and its neighboring nodes at each time step, END removes only one edge at each time step, and the edges removed at adjacent time steps are not necessarily related. Specifically, given a network $G(V,E)$, where $V$ and $E$ denote the set of nodes and the set of edges in the network respectively, and $N=|V|$ and $M=|E|$ are the number of nodes and edges of the network respectively, then END can be formulated as follows: given a dismantling target $C$, find a minimum set of edges $E_{r}$ such that the size $S_{GCC}$ of the GCC in the network $G^*(V,E\backslash E_{r})$ satisfies $S_{GCC} \leq C$.

\subsection{Optimal percolation theory}\label{S1.1}

Let the vector $\mathbf{m}=\{m_1,\dots,m_M\}$ denote whether the edges in the network are removed or not, where $m_i=0$ means that edge $i$ is removed, otherwise $m_i=1$. The fraction of removed edges is $q=1-\frac{\sum_{i}m_i}{M}$. As shown in Fig.~\ref{fig:S0}, for a given network $G(V,E)$, we can transform it into a line graph $L(G)$. Each node of $L(G)$ represents the corresponding edge of $G$, and two nodes of $L(G)$ are adjacent if and only if their corresponding edges share a common endpoint in $G$~\citesec{hararySome1960} \!\!. Here,  we map edges to nodes using the mapping of the network $G$ to the line graph $L(G)$. Let $A_m=\{a_{ij}\}$ denote the edge adjacency matrix of network $G$, where $a_{ij}=1$ if edge $i$ is adjacent to edge $j$, otherwise $a_{ij}=0$. We define the generalized degree of edge $i$ as $k_i=\sum_j a_{ij}$, which denotes the number of edges adjacent to $i$. 

\begin{figure}[H]
  \centering
  \includegraphics[width=0.8\textwidth]{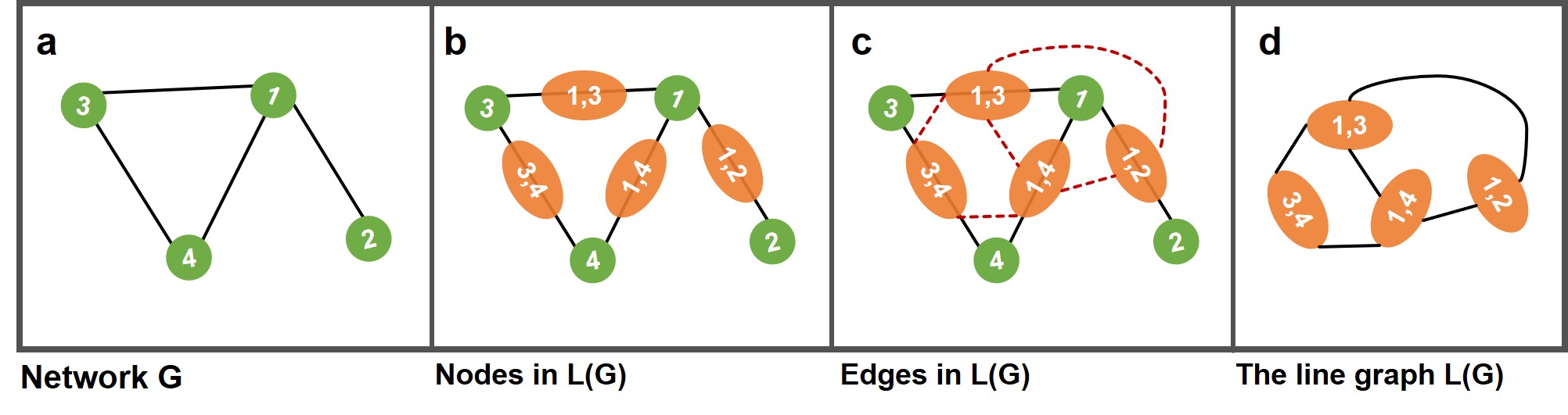}
  \caption{\centering {The transformation process from network $G$ to line graph $L(G)$.}}
  \label{fig:S0}
\end{figure}

For END, the order parameter of an edge's influence is the probability $e_{i \rightarrow j}$ that edge $i$ still belongs to the GCC after removing edge $j$. When the dismantling target $C\rightarrow 0$, it is required that $\sum_{i,j} e_{i \rightarrow j}$ becomes sufficiently small ~\citesec{moroneInfluenceMaximizationComplex2015} \!\!. If edge $j$ is temporarily removed from the network, the probability that edge $i$ belongs to the GCC depends on whether any of its neighboring edges (except for $j$) still belong to the GCC after removing $i$. For a network whose corresponding line graph is locally tree-like, this can be expressed as the following message passing formula~\citesec{newmanMessagePassingMethods2023} \!\!:
\begin{equation}
    e_{i \rightarrow j}=m_i\left[1-\prod_{k \in \partial i \backslash j}\left(1-e_{k \rightarrow i}\right)\right],
    \label{eq:SI 1}
\end{equation}
where $\partial i \backslash j$ is the set of edges adjacent to edge $i$ except for $j$. 

When the dismantling target $C\rightarrow 0$, it is required that the probability that a randomly selected node belongs to the GCC is small enough after removing some of the edges, which is consistent with $\sum_{i,j} e_{i \rightarrow j}\rightarrow0$. Thus, we need a proper $\mathbf{m}$ to satisfy $\sum_{i,j} e_{i \rightarrow j}\rightarrow0$. Consider the solution for all edges $i,j$: $\{e_{i \rightarrow j}=0\}$, whose stability is determined by the maximum eigenvalue $\lambda(\mathbf{m};q)$ of the linear operator $\hat{\mathcal{M}}$, where $\hat{\mathcal{M}}=\{\mathcal{M}_{k\rightarrow l,i\rightarrow j}\}$, $\mathcal{M}_{k\rightarrow l, i\rightarrow j}\equiv \frac{\partial e_{i\rightarrow j}}{\partial e_{k\rightarrow l}}|_{\{e_{i\rightarrow j}=0\}}$ ($e_{i\rightarrow j}=0$ when edge $i$ does not belong to the GCC after edge $j$ is removed). For the network whose corresponding line graph is locally tree-like, $\hat{\mathcal{M}}$ can be represented by the modified form of the edge-based non-backtracking operator $\hat{\mathcal{B}}$~\citesec{gloverSpectralPropertiesNonbacktracking2021} \!\!:
\begin{equation}
   \mathcal{M}_{k\rightarrow \ell,i\rightarrow j}=m_i\mathcal{B}_{k\rightarrow \ell,i\rightarrow j},
   \label{eq:SI 2}
\end{equation}
where
\begin{equation}
    \mathcal{B}_{k \rightarrow \ell, i \rightarrow j}=\left\{\begin{array}{l}
    1 \text { if } \ell=i \text { and } j \neq k, \\
    0 \text { otherwise. }
    \end{array}\right.
    \label{eq:SI 3}
\end{equation}

Let the first-order moment of the nodal degree distribution of the network $G(V,E)$ be $\langle k\rangle$, the second-order moment be $\langle k^2\rangle$, and we find that $\hat{\mathcal{M}}$ is an $M_l\times M_l$ matrix, where $M_l=(\langle k^2\rangle-\langle k\rangle)N$.

The solution $\{e_{i \rightarrow j}=0\}$ is stable when $\lambda(\mathbf{m};q)<1$. Thus, for a given $q$, to find the optimal solution $\mathbf{m}^\ast$ of END, we need to find the optimal edge removal configuration $\mathbf{m}^\ast$ such that $\lambda(\mathbf{m};q)$ is the smallest among all configurations that satisfy $\langle m\rangle = 1-q$. That is
\begin{equation}
    \lambda\left(\mathbf{m}^{*} ; q\right) \equiv \min _{\mathbf{m}:\langle m\rangle=1-q} \lambda(\mathbf{m} ; q).
    \label{eq:SI 4}
\end{equation}

The optimal threshold $q_c$ is the solution of the equation:
\begin{equation}
    \lambda\left(\mathbf{m}^{*} ; q_{c}\right)=1.
    \label{eq:SI 5}
\end{equation}

\subsection{Cost energy function of edge influence}\label{S1.2}

In the following, we fix $q$ and simplify $\lambda(\mathbf{m};q)$ to $\lambda(\mathbf{m})$. Based on the power method, $\lambda(\mathbf{m})$ can be represented as follows:
\begin{equation}
    \lambda(\mathbf{m})=\lim _{\ell \rightarrow \infty}\left[\frac{\left|\boldsymbol{w}_{\ell}(\mathbf{m})\right|}{\left|\boldsymbol{w}_0\right|}\right]^{1 / \ell},
    \label{eq:SI 6}
\end{equation}
where $\boldsymbol{w}_0$ is an arbitrary fixed vector and $\left|\boldsymbol{w}_{\ell}(\mathbf{m})\right|$ is given by the following equation:
\begin{equation}
    \left|\boldsymbol{w}_{\ell}(\mathbf{m})\right|=\left\langle\boldsymbol{w}_{\ell} \mid \boldsymbol{w}_{\ell}\right\rangle^{\frac{1}{2}}=\left|\hat{\mathcal{M}}^{\ell} \boldsymbol{w}_0\right|=\left\langle\boldsymbol{w}_0\left|\left(\hat{\mathcal{M}}^{\ell}\right)^{\dagger} \hat{\mathcal{M}}^{\ell}\right| \boldsymbol{w}_0\right\rangle^{\frac{1}{2}}\sim \mathrm{e}^{\ell \log \lambda(\mathbf{m})}.
    \label{eq:SI 7}
\end{equation}

To minimize $\lambda(\mathbf{m})$, we need to find the optimal configuration $\mathbf{m}$ minimizing $\left|\boldsymbol{w}_{\ell}(\mathbf{m})\right| (\ell\rightarrow\infty)$. In practice, we can only minimize the edge influence cost energy function $\left|\boldsymbol{w}_{\ell}(\mathbf{m})\right|^2$ for a finite $\ell$~\citesec{moroneInfluenceMaximizationComplex2015} \!\!. To satisfy the locally tree-like assumption of our entire approach, we can ignore the influence of loops in the corresponding line graph of the original network and approximate the local environment around any mapping node by a tree. Hence, the leading term of the cost function is approximated as~\citesec{moroneInfluenceMaximizationComplex2015} \!\!:
\begin{equation}
    E_{\ell}(\mathbf{m})=\sum_{i=1}^{M} z_{i} \sum_{j \in \partial \operatorname{Ball}(i, \ell)}\left(\prod_{k \in \mathcal{P}_{\ell}(i, j)} m_{k}\right) z_{j},
    \label{eq:SI 8}
\end{equation}
where $z_i=k_i-1$, $\ell$ is the length of the shortest path from edge $i$ to $j$, $\operatorname{Ball}(i, \ell)$ is the set of edges $j$ inside the ball with radius $\ell$ at the center of edge $i$, $\partial \operatorname{ Ball}(i, \ell)$ is the boundary of the ball, and $\mathcal{P}_{\ell}(i, j)$ is the shortest path of length $\ell$ between $i$ and $j$.

\subsection{Correspondence between network $G$ and its line graph $L(G)$}\label{S1.3}

Based on the definition of the line graph, we find the following correspondence between network $G$ and its line graph $L(G)$. The node with degree $k+1$ in $G$ corresponds to the $k$-simplex ( a complete subgraph with $k$ nodes) in $L(G)$ , and the edge adjacency matrix $A_m$ of $G$ corresponds to the adjacency matrix $A_{L(G)}$ of $L(G)$, so that the modified edge-based non-backtracking operator $\hat{\mathcal{M}}$ of $G$ corresponds to the modified non-backtracking operator $\hat{\mathcal{M}}_{L(G)}$ of $L(G)$. Thus the edge influence cost energy function $\left|\boldsymbol{w}_{\ell}(\mathbf{m})\right|^2$ of $G$ corresponds to the node influence cost energy function $\left|\boldsymbol{w}_{\ell}(\mathbf{n_{L(G)}})\right|^2$ of $L(G)$, and their approximations $E_{\ell}(\mathbf{m})$ and $E_{\ell}(\mathbf{n_{L(G)}})$ of the leading terms also correspond to each other. Table~\ref{tab:tab1} shows the corresponding elements in network $G$ and its line graph $L(G)$. Accordingly, the most influential node identified by CI in L(G) obviously corresponds to the most influential edge identified in $G$, as illustrated in Fig.~\ref{fig:ECI_SI}.

\begin{table}[H]
\centering
\caption{\centering {Correspondence between $G$ and $L(G)$}}
\label{tab:tab1}
\begin{tabular}{l|l}
\hline
$G$                                              & $L(G)$                                                             \\ \hline
Edge                                             & Node                                                             \\
Node with degree $k+1$                                 & $k$-simplex                                                       \\
$A_m$                                            & $A_{L(G)}$                                                       \\
$\hat{\mathcal{M}}$                              & $\hat{\mathcal{M}}_{L(G)}$                                       \\
$\left|\boldsymbol{w}_{\ell}(\mathbf{m})\right|^2$ & $\left|\boldsymbol{w}_{\ell}(\mathbf{\mathbf{n_{L(G)}}})\right|^2$ \\
$E_{\ell}(\mathbf{m})$                           & $E_{\ell}(\mathbf{n_{L(G)}})$                                    \\ \hline
\end{tabular}
\end{table}

\begin{figure}[H]
  \centering
  \includegraphics[width=0.5\textwidth]{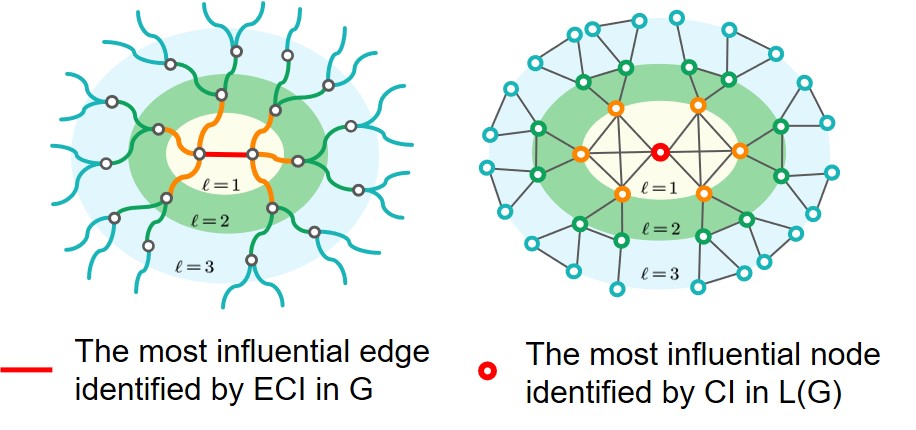}
  \caption{\centering {Correspondence between ECI in $G$ and CI in $L(G)$.}}
  \label{fig:ECI_SI}
\end{figure}

\subsection{Scalability of the ECI algorithm}\label{S1.4}

Similar to the CI algorithm, the complexity of the ECI algorithm is $O(M^2\log M)$ for finitely small $\ell$, and can be reduced to $O(M\log M)$ without loss of performance by removing a certain percentage of edges at each step~\citesec{moroneInfluenceMaximizationComplex2015} (Fig.~\ref{fig:S1}). The network density $d(L(G))$ of the line graph $L(G)$ corresponding to network $G$ is:
\begin{equation}
    d(L(G))=\frac{\sum_k \frac{k(k-1)}{2}p_k N}{M}=\frac{N}{M} \sum_k \frac{k^2-k}{2} p_k =\frac{\langle k^2 \rangle -\langle k \rangle}{\langle k \rangle}.
\end{equation}
Thus, when network $G$ exhibits the power-law degree distribution found in real networks, the density of $L(G)$ will be extremely large, which will lead to a dramatic increase in the time complexity of the ECI algorithm as $\ell$ increases. Therefore, we finally adopt $\ell=1$, after trading off the dismantling results of the ECI algorithm (Fig.~\ref{fig:S2}) with the complexity of the algorithm under different $\ell$.

\subsection{Effect of different sampling numbers $r$ on the sum rule}\label{S1.5}

Different $r$ implies different sampling numbers, which to some extent determines the improvement effect of the sum rule. In Fig.~\ref{fig:S3} we compare the dismantling results of the IECIR algorithm with different $r$. The dismantling performance at $r=100$ is comparable to that at $r=200$, even though the number of communities being compared at each step is only half. Therefore, reinsertion with $r=100$ is used throughout the main text.

\section{Network data}\label{S2.}

The real-world networks used in the main text consist of a power-grid network~\citesec{rossiNetworkDataRepository2015} (Power), a scientific collaboration network~\citesec{rossiNetworkDataRepository2015} (Collaboration), an Email network~\citesec{rossiNetworkDataRepository2015} (Email), and an online social network of Petster–Hamster~\citesec{renGeneralizedNetworkDismantling2019} (Social). The applications in these four networks can basically capture the real-world potential of our algorithms across various networks, encompassing tasks such as the maintenance of power grid systems, communication enhancement between scientific communities, curbing the malicious spread of disinformation in Email networks, and blocking epidemic outbreaks within society. The following table shows the topology information of these four networks.

\begin{table}[H]
\centering
\caption{\centering{Network Parameters}}
\begin{tabular}{l|llll}
\hline
          & $N$    & $M$     & $\langle k\rangle$ & $\langle k^2\rangle$ \\ \hline
Collaboration & 5094 & 7515  & 2.95               & 48.17                \\
Power         & 4941 & 6594  & 2.67               & 10.33                \\
Social        & 2000 & 16098 & 16.10              & 704.71               \\
Email         & 1134 & 5451  & 9.61               & 179.82              \\
\hline
\end{tabular}
\end{table}

\section{Baseline methods}\label{S3.}

\noindent \textbf{Bridgeness(BG)}~\citesec{chengBridgenessLocalIndex2010} 

Bridgeness is a local index used to measure the significance of an edge in maintaining the global connectivity of a network. For edge $e(u,v)$ , its ability to connect the large clusters in the network is defined as:

\begin{equation}
    B_{e(u, v)}=\frac{\sqrt{S_u S_v}}{S_{e(u, v)}},
    \label{eq:SI 15}
\end{equation}

\noindent where $S_u$, $S_v$ and $S_{e(u, v)}$ are the sizes of the largest clusters containing node $u$, node $v$ and edge $e(u, v)$, respectively.

\noindent \textbf{Edge Betweenness(EB)}~\citesec{girvanCommunityStructureSocial2002} 

Edge Betweenness is a centrality measure used to identify the edges that are most important for maintaining the global connectivity of a network. According to the idea that the more shortest paths between pairs of nodes pass through an edge $e(u, v)$, the more important the edge $e(u, v)$ is, the centrality of edge $e(u, v)$ is defined as

\begin{equation}
    EB(u, v)=\sum_{s \neq t \in V} \frac{\delta_{s t}(u, v)}{\delta_{s t}},
    \label{eq:SI 16}
\end{equation}

\noindent where $\delta_{st}$ is the number of all shortest paths between node $s$ and node $t$, and $\delta_{st}(u, v)$ is the number of all shortest paths between node $s$ and node $t$ that pass through edge $e(u, v)$. A larger EB score means greater importance of the edge.

\noindent \textbf{Collective Influence(CI)}~\citesec{moroneInfluenceMaximizationComplex2015} 

The CI algorithm identifies the smallest set of influencers by solving the node-based optimal percolation problem. 
Given the parameter $\ell $, the value of $CI_{\ell} (u)$ of node $u$ in the network is defined as follows:

\begin{equation}
    \mathrm{CI}_{\ell}(u)=\left(k_{u}-1\right) \sum_{v \in \partial \mathrm{Ball}(u, \ell)}\left(k_{v}-1\right),
    \label{eq:SI 9}
\end{equation}

\noindent where $k_u$ is the degree of node $u$ and $\partial \mathrm{Ball} (u,\ell)$ denotes the set of nodes in the network whose shortest path length to $u$ is $\ell$.

\noindent \textbf{Explosive Immunization(EI)}~\citesec{clusellaImmunizationTargetedDestruction2016} 

The EI algorithm is a method used to interrupt the spread of infection in a network. It combines the explosive percolation (EP) paradigm with the idea of maintaining a fragmented distribution of clusters. This algorithm heuristically utilizes two node scores, $\sigma_u^{(1)}$ and $\sigma_u^{(2)}$, to estimate a node's ability to interrupt infection propagation at two phases, which are defined as follows:

\begin{equation}
    \sigma_u^{(1)}=k_u^{(\mathrm{eff})}+\sum_{\mathcal{C}\subset \mathcal{N}_u}(\sqrt{|\mathcal{C}|}-1).
    \label{eq:SI 10}
\end{equation}

\noindent The first term $k_u^{\text {(eff) }}$ is the effective degree of node $u$, which is determined self-consistently from the original degree $k_u$:
\begin{equation}
    k_u^{(\mathrm{eff})}=k_u-L_u-M_u\left(\left\{k_v^{(\mathrm{eff})}\right\}\right),
    \label{eq:SI 11}
\end{equation}

\noindent where $L_u$ and $M_u$ are the number of leaf and hub nodes in the vicinity of $u$ respectively. During the iteration, nodes with effective degree $k_v^{(\mathrm{eff})} \geq K$ are regarded as hub nodes for a suitably chosen constant $K$. The second term is determined by the size $|\mathcal{C}|$ of cluster $\mathcal{C}$ in the set $\mathcal{N}_u$ of all clusters linked to $u$.

As the percolation process proceeds, some harmful nodes identified by $\sigma_u^{(1)}$ become harmless. To distinguish the influence of nodes more accurately, the EI algorithm uses $\sigma_u^{(2)}$ to evaluate the influence of nodes in Phase 2.

\begin{equation}
    \sigma_u^{(2)}= \begin{cases}\infty & \text { if } \mathcal{G}(q) \not \subset \mathcal{N}_u, \\ \left|\mathcal{N}_u\right| & \text { else, if } \arg \min _u\left|\mathcal{N}_u\right| \text { is unique, } \\ \left|\mathcal{N}_u\right|+\epsilon\left|\mathcal{C}_2\right| & \text { else. }\end{cases}
    \label{eq:SI 12}
\end{equation}

\noindent Here $\mathcal{G}(q)$ is the largest cluster as $qN$ nodes are removed, $|\mathcal{N}_u|$ is the number of clusters in the neighborhood of $u$, $\mathcal{C}_2$ is the second-largest cluster in $\mathcal{N}_u$, and $\epsilon$ is a small positive number (its value is not important provided $\epsilon \ll \frac{1}{N}$).
See the original article~\citesec{clusellaImmunizationTargetedDestruction2016} for more details about the EI algorithm.

\noindent \textbf{Generalized Network Dismantling(GND)}~\citesec{renGeneralizedNetworkDismantling2019} 

GND is a method designed to fragment the network into subcritical network components with minimal removal cost. It is based on the spectral properties of the node-weighted Laplacian operator $L_w$ , and thereby transforms the generalized network dismantling problem into an integer programming problem as shown below. It achieves high performance by combining the approximation spectrum of the Laplace operator with a fine-tuning mechanism associated with the weighted vertex cover problem.

\begin{equation}
    \min_{\boldsymbol{x}=\{x_1,x_2,\dots,x_n\}} \frac{1}{4} \boldsymbol{x}^\top L_w \boldsymbol{x}
    \label{eq:SI 13}
\end{equation}

\noindent subject to

\begin{equation}
    \begin{gathered}
    \boldsymbol{1}^\top \boldsymbol{x}=0, \\
    x_i \in\{+1,-1\}, i \in\{1,2, \ldots, n\} .
    \end{gathered}
    \label{eq:SI 14}
\end{equation}

\noindent To further improve the performance of GND, Ren et al. applied the reinsertion method and proposed GND with reinsertion (GNDR) ~\citesec{renGeneralizedNetworkDismantling2019} \!\!.
See the original article~\citesec{renGeneralizedNetworkDismantling2019} for more details about GND and its improved version GNDR.

\section{Dismantling transition point and dismantling threshold}\label{SI:pt_dt}

For reverse processes of the DCP model and the IDCP model, the dismantling transition point ($TP$) refers to the point where $S_{GCC}$ starts to drop rapidly, and the dismantling threshold ($q_c$) refers to the fraction of edges removed when reaching the dismantling target $S_{GCC} \leq C (C=0.01N)$. Therefore, $q_c > TP$. We denote the difference between them as $\Delta q = q_c - TP$. For both the DCP and IDCP networks with $N=2^n$ and $M=\frac{N\langle k\rangle}{2}$, suppose we need to further remove $\Delta m$ edges to reach the dismantling target after $TP$. Since each cluster in the network satisfies the dismantling target when its size is $2^{n-7}$, $\Delta m$ can be approximated as Eq.~\eqref{eq:delta m}, and $\Delta q = \frac{\Delta m}{M} \approx \frac{1}{2^{n-8}\langle k \rangle}$. Therefore, $q_{c}\approx TP+\frac{1}{2^{n-8} \langle k\rangle}=TP+\frac{256}{N\langle k\rangle}$.

\begin{equation}
    \begin{aligned}
        \Delta m &= (N-1)-(\frac{1}{2}N+\frac{1}{4}N+\dots+\frac{1}{2^{n-7}}N)\\
        &=\frac{1}{2^{{n-7}}}N-1
        \label{eq:delta m}
    \end{aligned}
\end{equation}

For the IDCP network, the dismantling threshold of the ECI algorithm, $q_{c_{ECI}}$, hardly varies with parameter $p$ and parameter $S_{\min}$, while the dismantling threshold of the IECI algorithm, $q_{c_{IECI}}$, varies significantly across different parameter cases (Fig.~\ref{fig:S4}).
Visibly, as the community of the network becomes sparser, i.e., $p$ or $S_{\min}$ increases, $q_{c_{IECI}}$ will increase because the improvement of the sum rule on the ECI algorithm will be diminished.

\section{How do the ECI and IECI algorithms work?}\label{S5.}

As shown in the main text, the ECI dismantling process, in reality, is not exactly identical to the reverse DCP process: $S_{GCC}$ will decrease earlier and achieve the dismantling target later. Figs.~\ref{fig:S5} a and b show that when confronted with a particular network structure, ECI will not prioritize dismantling loops or higher-order structures into chains, but will first separate them from the network, which will lead to an early drop of $S_{GCC}$. This is because the bridging edges, identified as the most crucial nodes in their line graph by ECI, are prioritized for removal, leading to disconnections. Such edges are referred to as ``weak edges". In addition, when the network is dismantled into chain-only clusters, there are many edges with the same maximum collective influence identified by ECI, and different removal orders of them will yield various dismantling results (Fig.~\ref{fig:S5}c). 

The IECI algorithm basically compensates for the shortcomings of ECI by the sum rule. The more obvious the small communities, the better the ECIR algorithm performs (Fig.~\ref{fig:S4}). This is primarily because smaller community sizes tend to match the genuine local clustering structures in the network, especially for relatively sparse networks.

\bibliographystylesec{unsrt}
\bibliographysec{ECI}

\newpage
\begin{figure}[h!]\vspace*{+3cm}
\centering
	\sbox\mysavebox{
	\includegraphics[width=0.85\linewidth]{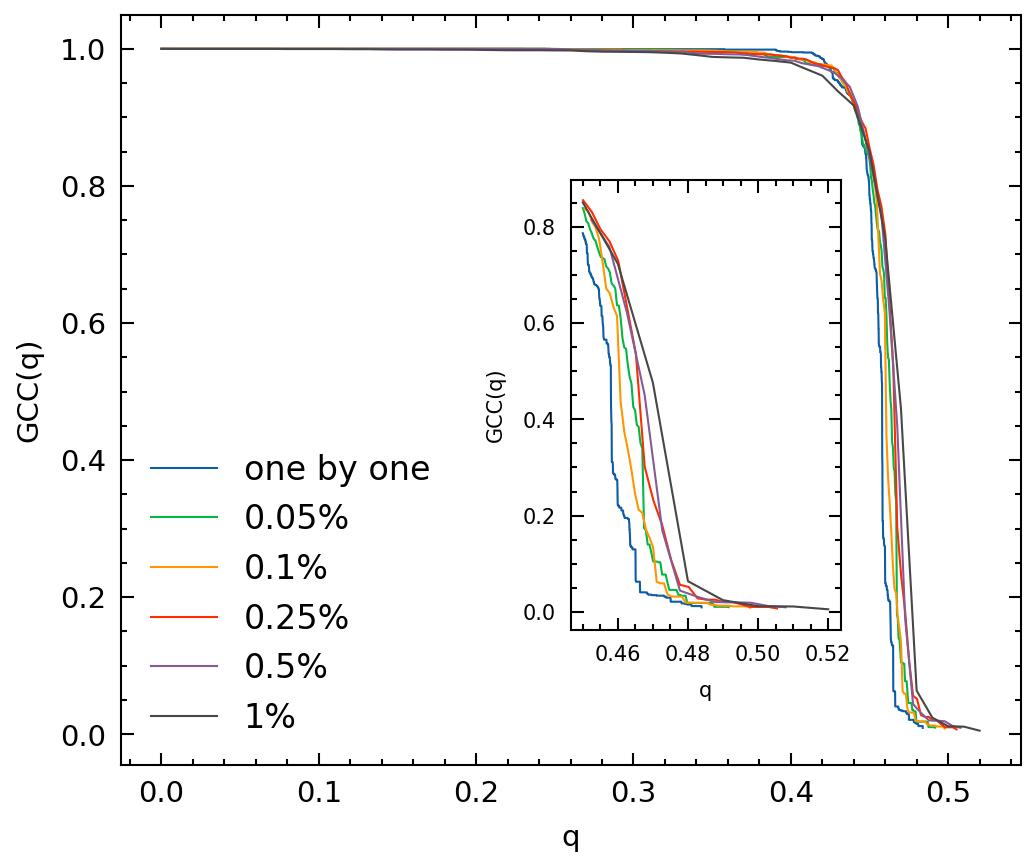}}%
	\usebox\mysavebox
  	\par
  	\begin{minipage}{\wd\mysavebox}\vspace*{+0.3cm}
		\caption{\small Comparison of different removal ratios for each step in an ER network with $N=10,000$ and $\langle k\rangle =3.5$. Inset: enlarged local results. } \label{fig:S1}
	\end{minipage}
\end{figure}

\newpage
\begin{figure}[h!]\vspace*{+3cm}
	\centering
	\sbox\mysavebox{
	\includegraphics[width=0.85\linewidth]{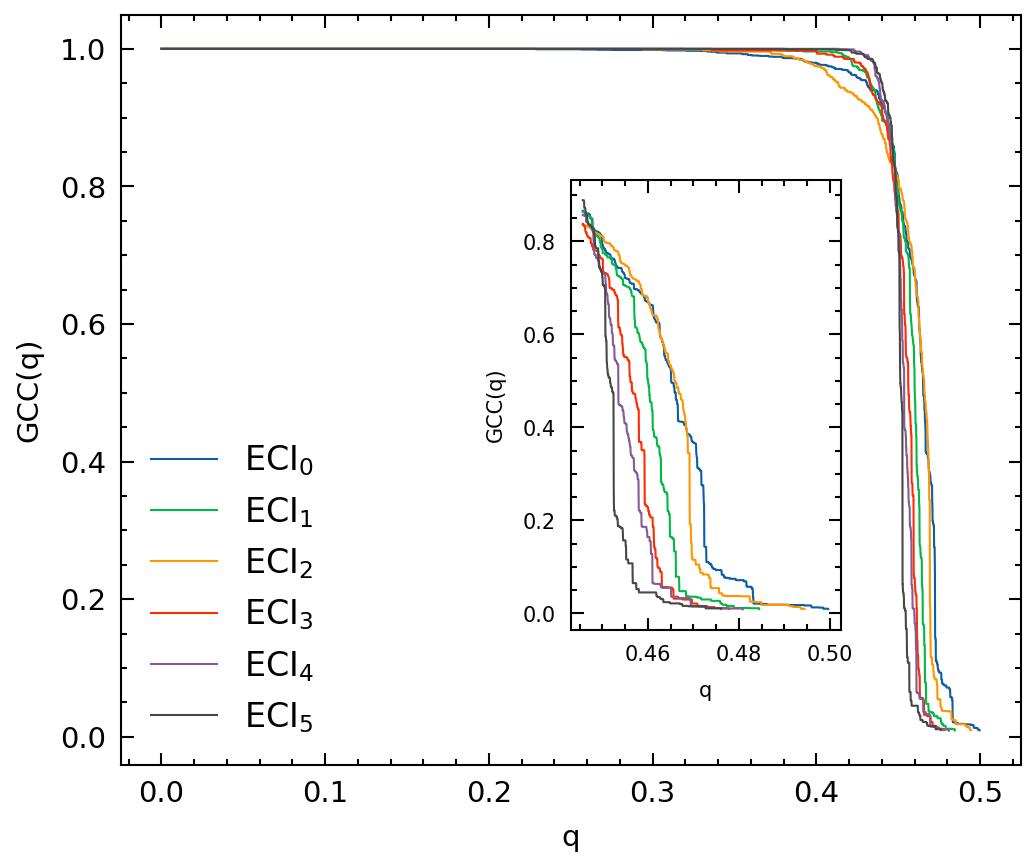}}%
	\usebox\mysavebox
  	\par
  	\begin{minipage}{\wd\mysavebox}\vspace*{+0.3cm}
		\caption{\small Comparison of the ECI algorithm for different radii $\ell$ (=0, 1, 2, 3, 4, 5) in an ER network with $N=10,000$ and $\langle k\rangle =3.5$. Inset: enlarged local results.}\label{fig:S2}
	\end{minipage}
\end{figure}

\newpage
\begin{figure}[h!]\vspace*{+3cm}
	\centering
	\sbox\mysavebox{
	\includegraphics[width=0.85\linewidth]{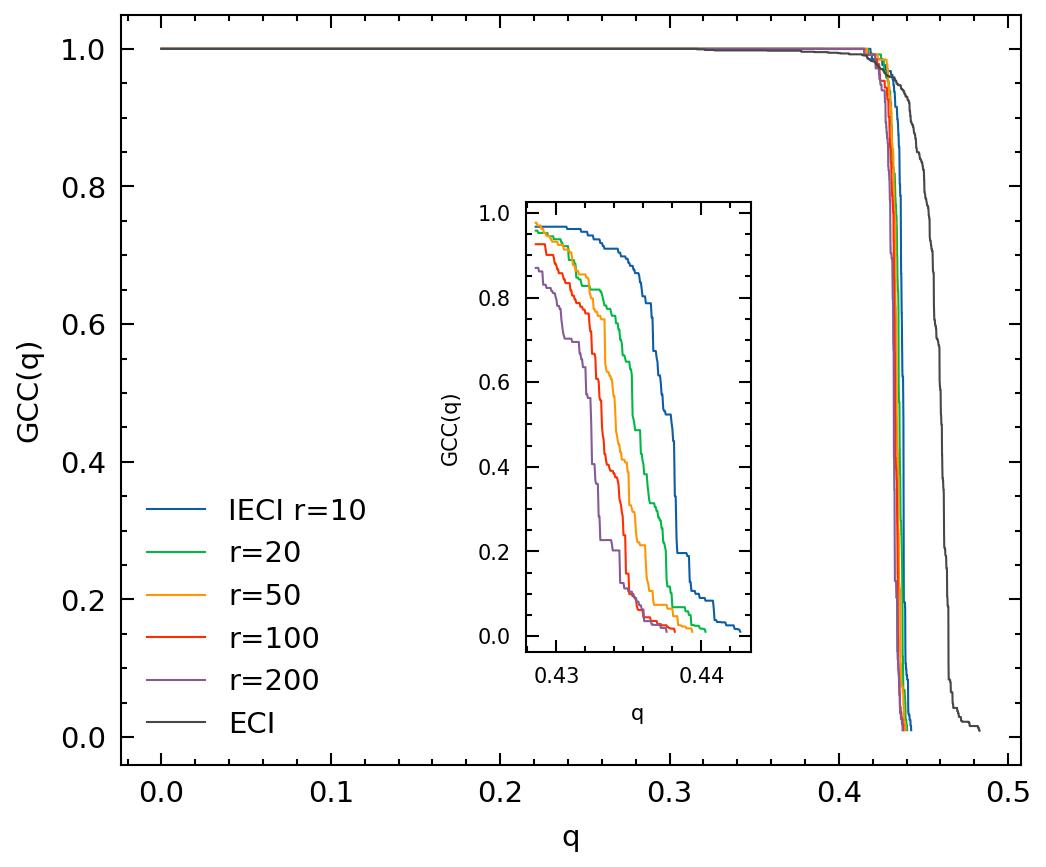}}%
	\usebox\mysavebox
  	\par
  	\begin{minipage}{\wd\mysavebox}\vspace*{+0.3cm}
		\caption{Improvement of the IECI algorithm with different $r$. We use $r=10, 20, 50, 100, 200$ in an ER network with $N=10,000$ and $\langle k\rangle =3.5$. Inset: enlarged local results.}\label{fig:S3}
	\end{minipage}
\end{figure}

\newpage
\begin{figure}[h!]\vspace*{+3cm}
	\centering
	\sbox\mysavebox{
	\includegraphics[width=0.85\linewidth]{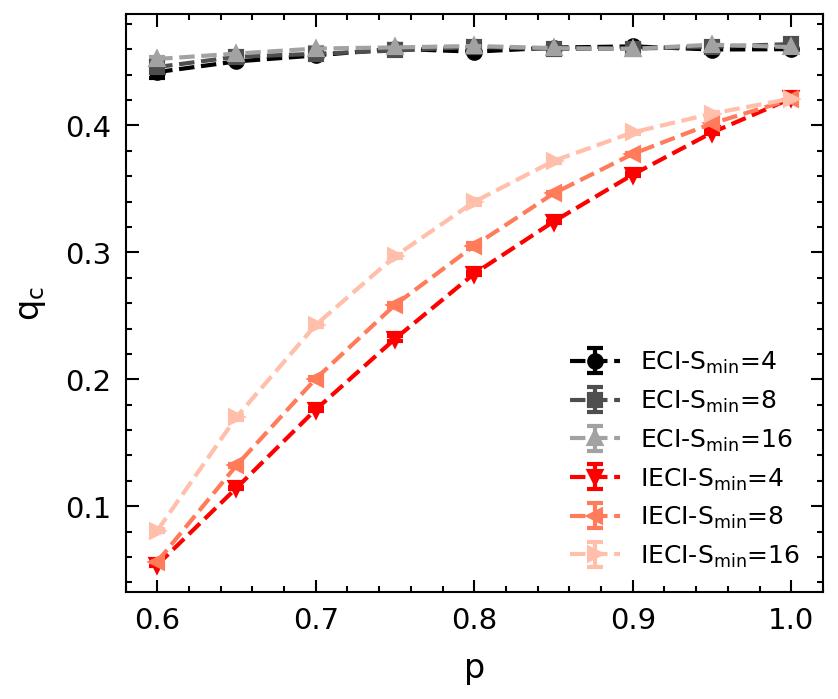}}%
	\usebox\mysavebox
  	\par
  	\begin{minipage}{\wd\mysavebox}\vspace*{+0.3cm}
		\caption{The dismantling thresholds of ECI and IECI versus $p$ in IDCP networks with parameters $S_{\min}=4, 8, 16$ (error bars are s.e.m. over 20 realizations).
		}\label{fig:S4}
	\end{minipage}
\end{figure}

\newpage
\begin{figure}[h!]\vspace*{+3cm}
  \centering
  \includegraphics[width=0.8\textwidth]{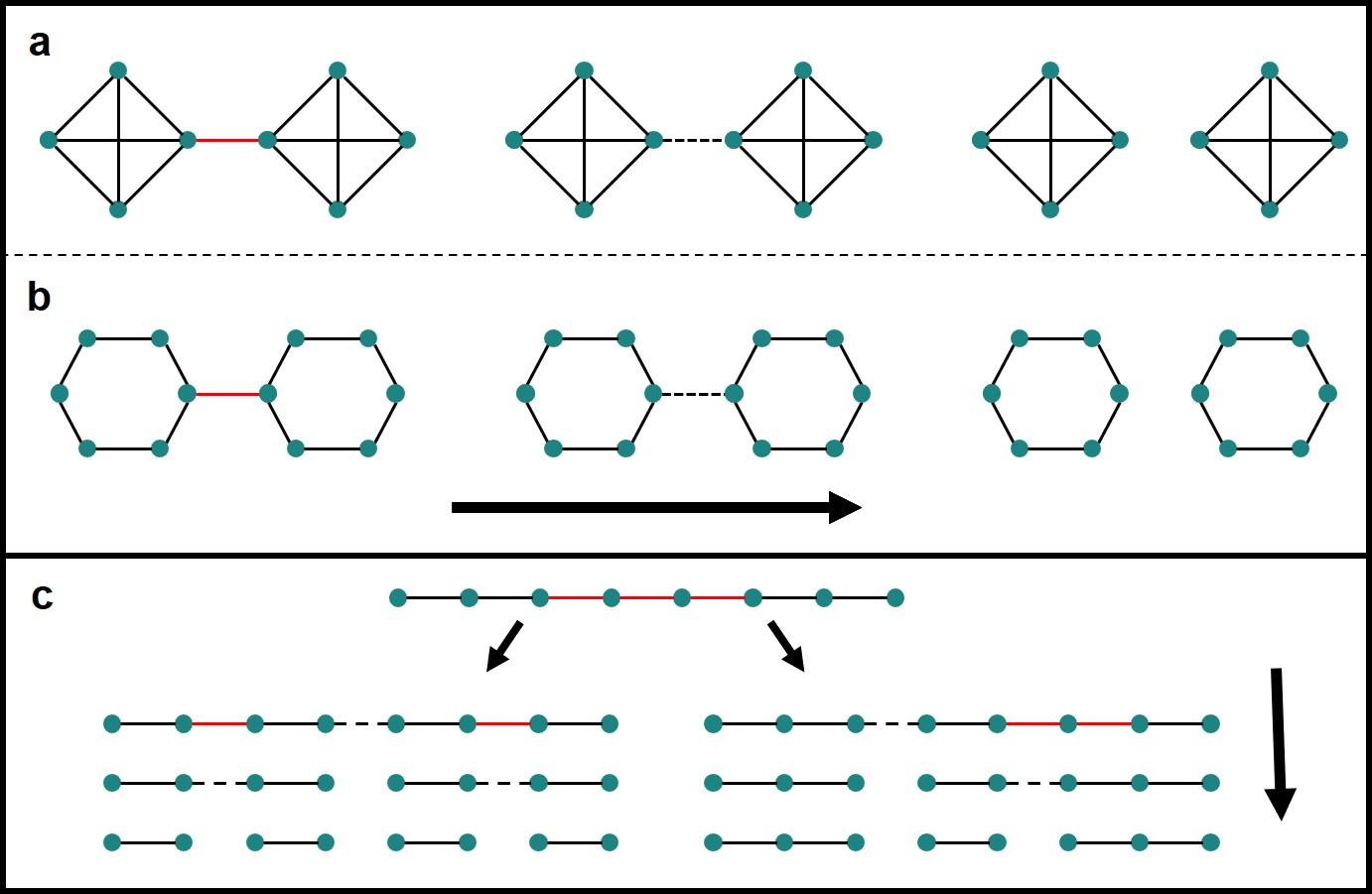}
  \caption{The procedure of ECI in some particular networks. The red edges indicate the edges with the greatest collective influence, and the dashed edges indicate the removed edges in the previous step.}
  \label{fig:S5}
\end{figure}
\end{document}